\title{Click Efficiency: A Unified Optimal Ranking for Online Ads and Documents}
\author{
Raju Balakrishnan\\
       rajub@asu.edu \\
       School of Computing and Informatics\\
       Arizona State University, Tempe AZ 85287
        \and
       Subbarao Kambhampati \\
       rao@asu.edu \\
       School of Computing and Informatics\\
       Arizona State University, Tempe AZ 85287
       }
\begin{abstract}
Traditionally the probabilistic ranking principle is used to rank the search results while the ranking based on expected profits is used for paid placement of ads. These rankings try to maximize the expected utilities based on the  user click models. Recent empirical analysis on search engine logs suggests a unified click models for both ranked ads and  search results. The segregated view of document and ad rankings does not consider this commonality. Further, the used model considers parameters of (i) probability of the user abandoning browsing results (ii) perceived relevance of result snippets. But how to consider them for improved ranking is unknown currently. In this paper, we propose a generalized  ranking function---namely \emph{Click Efficiency (CE)}---for documents and ads based on empirically proven user click models. The ranking considers parameters (i) and (ii) above, optimal and has the same time complexity as sorting. To exploit its generality, we examine the reduced forms of CE ranking under different assumptions enumerating a hierarchy of ranking functions. Interestingly, some of the rankings in the hierarchy are currently used ad and document ranking functions; while others suggest new rankings. Thus this hierarchy illustrates the relations between different rankings, as well as clarifies the underlying assumptions.  While optimality of ranking is sufficient for document ranking, applying $CE$ ranking to ad auctions requires an appropriate pricing mechanism. We incorporate a second price based pricing mechanism with the proposed ranking. Our analysis proves several desirable properties including revenue dominance over VCG for the same bid vector and existence of a Nash equilibrium in pure strategies. The equilibrium is socially optimal, and revenue equivalent to the truthful  VCG equilibrium. As a result of its generality,  the auction mechanism and the  equilibrium reduces to the current mechanisms including GSP and corresponding equilibria. Further, we relax the independence assumption in CE ranking and analyze the diversity ranking problem. We show that optimal diversity ranking is NP-Hard in general, and that a constant time approximation algorithm is not likely.
\end{abstract}
\keywords{ad ranking, document ranking, diversity, unified ranking}
\begin{document}

\maketitle
\section{Introduction}
Search engines rank the  results to maximize relevance of the top documents. On the other hand,  targeted ads are ranked to maximize the profit from clicks. In general, the users browse through ranked lists of results or ads from top to bottom either clicking or skipping the results, or abandoning browsing the list due to impatience or satiation. The  goal of the ranking is to maximize the expected  relevances (or profits) of clicked results based on the  click model of the users. The sort by relevance ranking suggested by Probability Ranking Principle (PRP) is commonly used for search results for decades~\cite{robertson:prp,michael:utility}. In contrast, sorting by the expected profits calculated as the product of bid amount and Click Through Rate (CTR) is popular for ranking ads~\cite{richardson:clickpred}.

  Recent click models suggest that the user click behaviors for both search results and targeted ads are the same~\cite{guo2009click,zhu2010novel}. Considering this commonality, the only difference between the two ranking problems is the utilities of entities ranked: for documents utility is the relevance and for the ads it is the cost-per-click. This suggests the possibility of a unified ranking function for results and ads. The current segregation of  document and ad ranking as separate areas does not consider this commonality. A unified approach can help to widen the   scope  of the related research to these two areas, and enable applications of  existing ranking functions in one area to isomorphic problems in the other area as we will show below.

In addition to the unified approach, the recent click models consider the following parameters:
 \begin{enumerate}
  \item \textbf{Browsing Abandonment: } The user may abandon browsing ranked list at any point. The likelihood of abandonment may depend on the entities the user has already seen~\cite{zhu2010novel}.  \label{paramAbandon}
 \item \textbf{Perceived Relevance: } Perceived relevance is the user's relevance assessment viewing only the search snippet or ad impression. The decision to click or not depends on the perceived relevance, not on the actual relevance of the results~\cite{yue-beyond,clarke:captionsfeatures}. \label{paramPerceived}
 \end{enumerate}
Though these parameters are part of the click models~\cite{guo2009click,zhu2010novel} how to exploit these parameters to improve ranking is unknown. The current document ranking is based on the simplifying assumption  that the perceived relevance is the same as the actual relevance of the document, and ignores the browsing abandonment. The ad placement partially considers perceived relevance, but ignores abandonment probabilities.

 We propose a unified optimal ranking function---namely \emph{Click Efficiency (CE)}---based on a generalized click model of the user. CE is defined as the ratio of the stand-alone utility generated by an entity to the sum of the abandonment probability and click probability of that entity (abandonment probability is the probability for the user to leave browsing the list after viewing the entity). The sum of the abandonment and click probability may be viewed as the click probability consumed by the entity. We derive the name Click Efficiency  based on this view---similar to the  definition of the mechanical efficiency of a machine as the ratio of  the output  to the input energy. We show that sorting in the descending order of CE of entities guarantees optimum ranking utility. We do not make assumptions on  the utilities of the entities, which may be assessed relevance  for documents or cost per click (CPC) charged based on the auction for ads. On plugging in the appropriate utilities---relevance for documents and CPC for the ads---the ranking specializes to document and ad ranking.

 As an implication of the generality, the proposed ranking will reduce to specific ranking problems on assumptions on the user behavior. We enumerate a hierarchy of ranking functions corresponding to the limiting assumptions on the click model. Most interestingly, some of these special cases correspond to the currently used  document and ad ranking functions---including PRP and sort by expected profit described above. Further, some of the reduced ranking functions suggest new rankings for special cases of the click model---like a click model in which the user never abandons the search, or the perceived relevance is approximated as the actual relevance. This hierarchy elucidates interconnection between different ranking functions and the assumptions behind the rankings. We believe that this will help in choosing the appropriate ranking function for a particular user click behavior.


\textbf{Ad Ranking: }An ad placement mechanism consists of a ranking and a pricing strategy. Hence to apply the CE ranking to ad placement, a pricing mechanism has to be associated. We incorporate a second price based pricing mechanism with the proposed ranking. Our analysis establishes many interesting properties of the proposed mechanism. Particularly, we state and prove the existence of a Nash Equilibrium in pure strategies. At this equilibrium the profits of the search engine and the total revenue of the advertisers is simultaneously optimized.  Like ranking, this is a generalized auction mechanism, and reduces to the existing GSP and Overture mechanisms under the same assumptions as that of the ranking. Further, the stated Nash Equilibrium is a general case of the equilibriums of these existing mechanisms. Comparing the mechanism properties with that of VCG~\cite{vickrey1961counterspeculation,clarke1971multipart,groves1973incentives}, we show that for the same bid vector search engine revenue for the CE mechanism  will be greater or equal to that of VCG. Further,  the revenue for the proposed equilibrium is equal to the revenue of truthful dominant strategy equilibrium of VCG.

\textbf{Diversity Ranking: }Our analysis so far was based on the assumption of parameter independence between the ranked entities. We relax this assumption and analyze the implications based on a specific well known problem---diversity ranking~\cite{carterette-analysis,agrawal2009diversifying,rafiei-diversifying}. Diversity ranking try to maximize the collective utility of top-$k$ ranked entities. For a ranked list, an entity will reduce residual utility of a similar entity below in the list. Though optimizing many of the specific ranking functions incorporating diversity is known to be NP-Hard~\cite{carterette-analysis}, an understanding of why this is an inherently hard problem is lacking. By analyzing a significantly general case, we show that even the very basic formulation  of diversity ranking is NP-Hard. Further we extend our proof showing that a constant ratio approximation algorithm is unlikely. As a benefit of the generality of ranking, these results are applicable both for ads and documents.

The contributions of the unified ranking, including both ad and document domains are:
\begin{enumerate}
\item Unified optimal ranking (CE ranking) based a generalized click model.
\item Optimal ranking considering abandonment probabilities for documents and ads.
\item Optimal Ranking considering perceived relevance of documents and ads.
\item A unified hierarchy of ranking functions and enumerating optimal rankings for different click models.
\item Analysis of general diversity ranking problem and hardness proofs.
\end{enumerate}
Contributions to ad placement are:
\begin{enumerate}
\item Design and analysis of a generalized ad auction mechanism incorporating pricing with CE ranking.
\item Proving existence of a  socially optimal Nash Equilibrium with optimal advertisers revenue as well as optimal search engine profit.
\item Proof of search engine revenue dominance over VCG for equivalent bid vectors, and equilibrium revenue equivalence to the truthful VCG equilibrium.
\end{enumerate}

The rest of this paper is organized as following. Next section reviews related work. Section~\ref{sec-clickModel} explains the click model used for our analysis. Subsequently we introduce our optimal ranking function, and discuss the intuitions and implications. In Section~\ref{sec-existing} reductions of ranking function to several document  and ad ranking functions under limiting assumptions are enumerated. Further we discuss several useful special cases of our ranking and assumptions under which they are optimal. In Section~\ref{sec-pricing} we incorporate a pricing strategy to design a complete auction mechanism for ads. Several  useful properties are established, including the existence of Nash equilibrium and revenue dominance over VCG. Section~\ref{sec-divRanking} explores the ranking considering mutual influences and proves our hardness results. Finally we conclude by discussing potential future research.

\section{Related Work}
\label{rel-work}

The impact of click models on ranking has been analyzed in ad-placement. Balakrishnan and Kambhampati~\cite{raju2008optimal} proposed the optimal ad ranking considering mutual influences. The ranking uses the same user model, but the paper considers only ads, and does not include generalization and ad auction mechanisms. Aggarwal~\emph{et al.}~\cite{aggarwal-sponsored}  as well as Kempe and  Mahdian~\cite{kempe2008cascade} analyze placement of ads using a Markovian click model.  The click model is similar except for that the abandoning is not modeled separately from continuing probability. These two papers  optimize the  sum of the revenues of the advertisers, instead of the optimizing search engine profits as we do for ads in this paper. Giotis and Karlin~\cite{giotis2008equilibria} extend this work by applying GSP pricing and analyzing the equilibrium. The GSP pricing and ranking lacks the optimality and generality properties we prove in this paper. Deng and Yu~\cite{deng2009new} extend this work by suggesting a ranking and pricing schema for the search engines and prove the existence of a Nash Equilibrium.  The ranking is a simpler bid based ranking (not based on CPC as in our case); and mechanism as well as equilibrium do not show optimality properties we prove in this paper. Kuminov and Tennenholtz~\cite{kuminov2009user} proposed a Pay Per Action (PPA) model similar to the click models and compared the equilibrium of GSP mechanism on the model with the VCG.  Ad auctions considering influence of other ads on conversion rates  are analyzed by Ghosh and Sayedi~\cite{ghosh2010expressive}.

The existing document ranking based on PRP~\cite{robertson:prp} claims that a retrieval order sorted on relevance leads to the largest number of relevant
documents in a set than any other policy. Gordon and Lenk \cite{michael:utility,michael:suboptimal} identified the required assumptions for the optimality of the ranking according to PRP. Our discussion on PRP may be considered as an independent formulation of assumptions under which PRP is optimal for web ranking.

User behavior studies in  click models  validate  the  ranking function introduced. There are a number of position based and cascade models studied recently~\cite{dupret2008user,craswel:cascade,guo2009click,chapelle2009dynamic,zhu2010novel}. In particular, General Click Model (GCM)~ by Zhu \emph{et al.}\cite{zhu2010novel} is interesting for us, since other click models are special cases of GCM.  Zhu \emph{et al.}~\cite{zhu2010novel} have listed assumptions under which the GCM would reduce to other click models. We will discuss the relations of our model to GCM below. Optimizing utilities of two dimensional placement  of search results has been studied by Chierichetti~\emph{et al.}~\cite{chierichetti2011optimizing}

Along with the current click models, there has been research on evaluating perceived relevance of the search snippets~\cite{yue-beyond} and ad impressions~\cite{clarke:captionsfeatures}. Research in this direction neatly complements our new ranking function by estimating the parameters required.

Diversity ranking has received considerable attention recently~\cite{agrawal2009diversifying,rafiei-diversifying}. The objective functions used to measure diversity by prior works  are known to be NP-Hard~\cite{carterette-analysis}. 

\section{Click Model}
\label{sec-clickModel}
\begin{figure}
\centering
\includegraphics[width=50mm, height=40mm,trim=57mm 195mm 108mm 45mm]{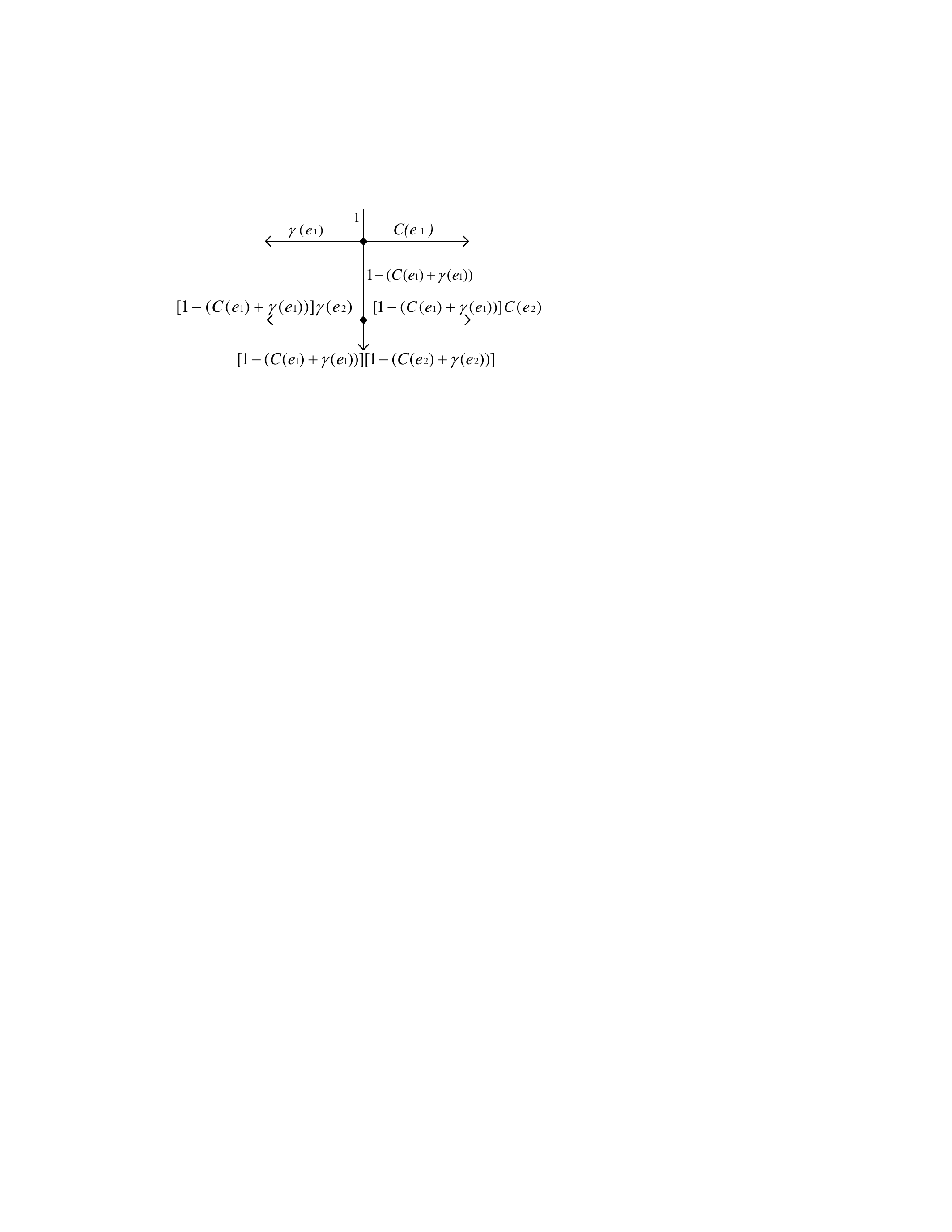}
\caption{Flow graph for an user browsing the first two entities. The labels
are the view probabilities and $e_i$ denotes the entity at the $i^{th}$
position}
\label{flowgraph-fig}
\end{figure}
\label{sec-usermodel}

We assume a basic user click model in which the web user browses the entity list in  ranked order, as shown in Figure~\ref{flowgraph-fig}. At every result entity, the user may:
\begin{enumerate}
\item Click the result entity with \emph{perceived relevance} $C(e)$. We define the perceived relevance as the probability to click the entity $e_i$ having seen $e_i$ i.e. $C(e_i)=P(click(e_i)| view(e_i))$. Note that Click Through Rate (CTR) defined in ad placement is the same as the perceived relevance defined here~\cite{richardson:clickpred}.
\item Abandon browsing the result list with \emph{abandonment probability} $\gamma(e_i)$.  $\gamma(e_i)$ is defined as the probability of abandoning the search at $e_i$ having seen $e_i$. i.e. $\gamma(e_i)=P(abandonment(e_i)|view(e_i))$.
\item Go to the next entity in the result list with probability $\big[1-\big(C(e_i)+\gamma(e_i) \big) \big]$\label{um-prob2}
\end{enumerate}

The click model can be schematically represented as the flow graph shown in Figure~\ref{flowgraph-fig}.
Labels on the edges refer to the probability of the user traversing
them. Each vertex in the figure corresponds to a view epoch (see
below), and the flow balance holds at each vertex. Starting from the
top entity, the probability of the user clicking the first ad is $R(e_1)$
and probability of him abandoning browsing is $\gamma(e_1)$. The user
goes beyond the first entity with probability $1-(R(e_1)+\gamma(e_1))$ and
so on for the subsequent results.

 In this model, we assume that the parameters---$C(e_i)$, $\gamma(e_i)$ and $U(e_i)$---are functions of the entity at the current position i.e. these parameters are independent of other entities the user has already seen. We recognize that this assumption is not fully accurate, since the users decision to click the current item or leave search may depend on not just on the current item but rather all the items he has seen before in the list. We stick to the assumption for the optimal ranking analysis below, since considering mutual influence of ads can lead to combinatorial optimization problems with intractable solutions. We will show that  even the simplest dependence between the parameters will indeed lead to intractable optimal ranking in Section~\ref{sec-divRanking}.

 Though the proposed model is intuitive enough, we would like to mention that the our  model is confirmed by the recent empirical click models. For example,  the General Click Model (GCM) by Zhu \emph{et al.}~\cite{zhu2010novel} is based on the same basic user behavior. The GCM is empirically validated for both search results and ads~\cite{zhu2010novel}. Further, other  click models are shown to be special cases of GCM (hence special cases of the model used in this paper).  Please refer to  Zhu \emph{et al.}~\cite{zhu2010novel} for a detailed discussion. These previous works avoids the need for separate model validation, as well as confirms feasibility of the parameter estimation.

\section{Optimal Ranking}
\label{sec-optRank}
Based on the click model, we formally define the ranking problem
and derive optimal ranking in this section.  The formal problem
statement is,

{\em Choose the optimal ranking $E_{opt}=\langle
e_{1},e_{2},..,e_{N}\rangle$ of $N$ entities to maximize the expected utility
\begin{equation}E(U)=\sum_{i=1}^{N}U(e_i)P_c(e_i)\label{objective-eq}\end{equation} where $N$ is
the total number of entities to be ranked.}

For the  browsing model in Figure~\ref{flowgraph-fig},  the click probability for the entity at the $i^{th}$ position is,
\begin{equation}P_c(e_i) = C(e_i) \prod_{j=1}^{i-1} \left[1 - \big( C(e_j) + \gamma(e_j)\big) \right] \label{eqn-umodel}\end{equation}

Substituting click probability $P_c$ from Equation \ref{eqn-umodel} in
Equation~\ref{objective-eq} we get,
\begin{equation}
E(U) = \sum_{i=1}^{N}U(e_i) C(e_i) \prod_{j=1}^{i-1} \left[ 1 - ( C(e_j) + \gamma(e_j)) \right] \label{selection-prob-eq}
\end{equation}

The optimal ranking maximizing this expected utility can be shown to be
a sorting problem with a simple ranking function:
\newtheorem{thm}{Theorem} \begin{thm}The expected utility in Equation~\ref{selection-prob-eq} is  maximum if the
entities are placed in the descending order of the value of the ranking function $CE$,
\begin{equation} CE(e_i)=\frac{U(e_i)C(e_i)}{C(e_i)+\gamma(e_i)} \label{eqn-RF}\end{equation}
\label{thmRF}
\end{thm}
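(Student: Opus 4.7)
The plan is to establish the theorem via a standard adjacent-swap exchange argument on the expected-utility expression in Equation~\ref{selection-prob-eq}. Concretely, I will show that any ranking in which two adjacent entities appear out of descending $CE$ order can be strictly improved (or at least not worsened) by swapping them, so by a bubble-sort-style induction the globally optimal ranking must list the entities in descending order of $CE$.

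\noindent\textbf{Key steps.} First, I would fix an arbitrary ranking $\langle e_1,\dots,e_N\rangle$ and an index $i$, and isolate the contribution of positions $i$ and $i+1$ to $E(U)$. Writing the common prefix factor as $A=\prod_{j=1}^{i-1}\bigl[1-(C(e_j)+\gamma(e_j))\bigr]$, the two affected terms contribute
\begin{equation*}
A\,U(e_i)C(e_i) \;+\; A\bigl[1-(C(e_i)+\gamma(e_i))\bigr]\,U(e_{i+1})C(e_{i+1}).
\end{equation*}
Second, I would observe the key symmetry that lets the argument go through locally: the prefix factor for every position $j\ge i+2$ contains $\bigl[1-(C(e_i)+\gamma(e_i))\bigr]\bigl[1-(C(e_{i+1})+\gamma(e_{i+1}))\bigr]$, which is invariant under swapping $e_i$ and $e_{i+1}$. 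Thus when I compute the difference between the original and swapped expected utilities, everything except the two terms above cancels. Third, I would simplify this difference; the $U(e_i)C(e_i)$ and $U(e_{i+1})C(e_{i+1})$ pieces combine cleanly to give
\begin{equation*}
A\Bigl\{U(e_i)C(e_i)\bigl[C(e_{i+1})+\gamma(e_{i+1})\bigr]\;-\;U(e_{i+1})C(e_{i+1})\bigl[C(e_i)+\gamma(e_i)\bigr]\Bigr\}.
\end{equation*}
Dividing through by the (positive) quantity $\bigl[C(e_i)+\gamma(e_i)\bigr]\bigl[C(e_{i+1})+\gamma(e_{i+1})\bigr]$ shows that keeping $e_i$ above $e_{i+1}$ is weakly optimal iff $CE(e_i)\ge CE(e_{i+1})$.

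\noindent\textbf{Conclusion step.} Finally, I would close the argument by the usual exchange/bubble-sort reasoning: if any ranking has an adjacent inversion with respect to $CE$, the swap strictly increases (or preserves) $E(U)$, so after finitely many swaps we reach the $CE$-sorted ranking without ever decreasing $E(U)$; hence this ranking attains the maximum.

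\noindent\textbf{Main obstacle.} The argument is essentially routine once the right decomposition is set up; the one point that deserves care is justifying that terms from positions $j\ge i+2$ truly drop out. This relies on the parameter-independence assumption stated in Section~\ref{sec-usermodel} (so that $C(e_j)$ and $\gamma(e_j)$ for $j\ge i+2$ are unaffected by what sits at positions $i,i+1$) together with the symmetry of the two-factor product noted above. A secondary, minor subtlety is the edge case $A=0$ or $C(e_i)+\gamma(e_i)=0$, which I would handle by observing that in those degenerate cases the swap leaves $E(U)$ unchanged, so the sorted order remains (weakly) optimal.
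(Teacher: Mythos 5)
Your proposal is correct and follows essentially the same adjacent-swap exchange argument as the paper's own proof in Appendix~\ref{appendixRF}: isolate the two affected terms, observe that the common prefix and all later terms are invariant under the swap, and reduce the comparison to $CE(e_i)\ge CE(e_{i+1})$. The only differences are cosmetic refinements on your part (making the bubble-sort closure explicit and noting the degenerate case $C(e_i)+\gamma(e_i)=0$, which the paper silently divides by), so no further changes are needed.
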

\emph{Proof Sketch:}
The proof shows that any inversion in this order will reduce the expected profit. $CE$ function is  deduced from expected profits of two placements---the $CE$ ranked placement and placement in which the order of two adjacent ads are inverted. We show that the expected profit from the inverted placement can be no greater that the $CE$ ranked placement. Please refer to Appendix~\ref{appendixRF} for the complete proof.
$\Box$

As mentioned in the introduction, the ranking function  $CE$ is the utility generated per  unit view
probability consumed by the entity.  With respect to browsing model in Figure~\ref{flowgraph-fig},  the top entities in
the ranked list have higher view probabilities, and placing ads with greater utility per consumed view
probability higher intuitively increases total utilities.

Note that the ordering above does not maximize the utility for selecting a subset of items.  The  seemingly intuitive method of ranking the  set of items by $CE$ and selecting top-$k$ may not be optimal~\cite{aggarwal-sponsored}. For optimal selection, the  proposed ranking can be extended by a dynamic programming  based selection---similar to the method suggested by Aggrawal \emph{et al}~\cite{aggarwal-sponsored} for maximizing advertiser's profit. In this paper,  we discuss only the ranking problem.

\section{Ranking Taxonomy}
\label{sec-existing}
\begin{figure*}[t]
\centering
\includegraphics[width=130mm, height=75mm,trim=0mm 145mm 0mm 5mm]{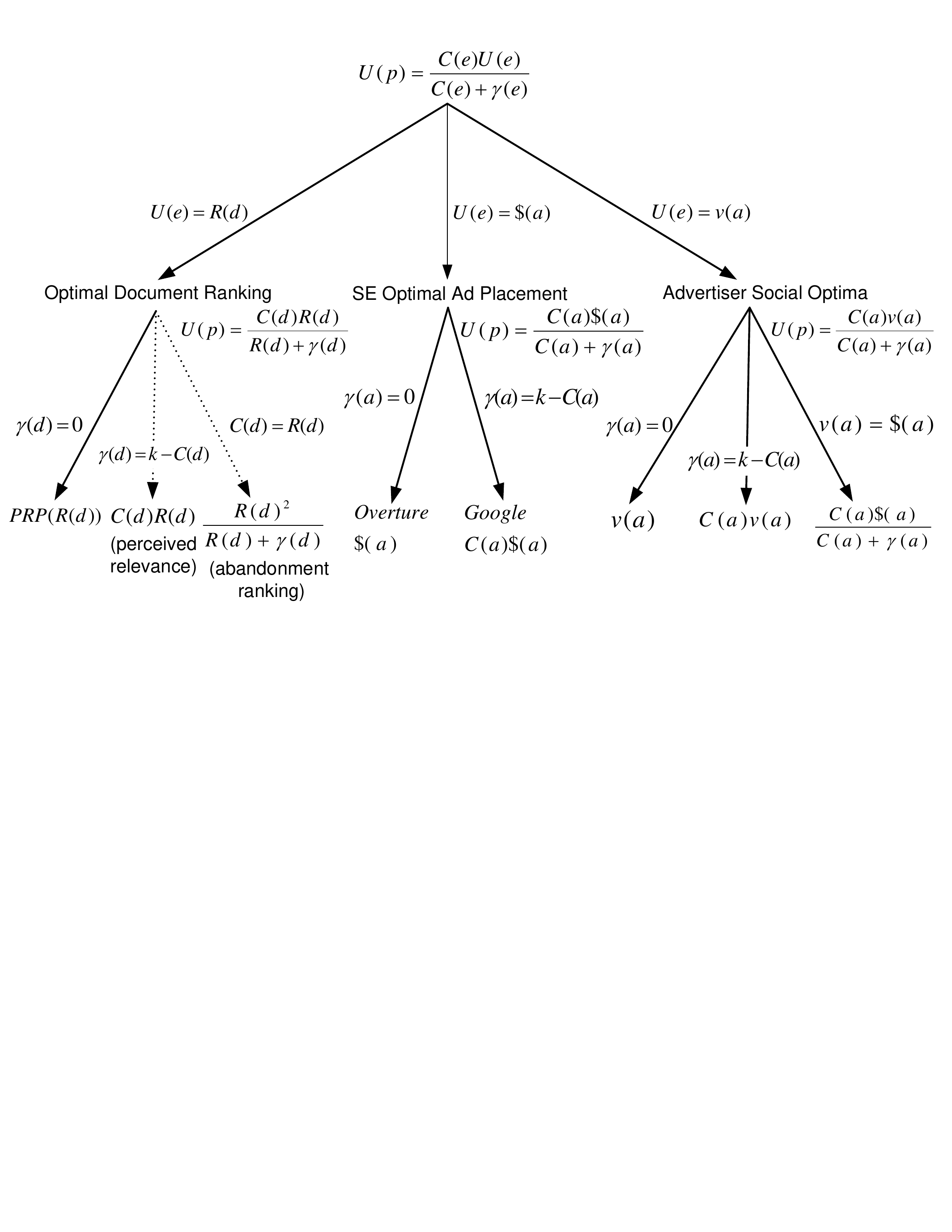}
\caption{Taxonomy reduced ranking functions of CE . The assumptions and corresponding reduced ranking functions are illustrated. The dotted lines denote predicted ranking functions incorporating new click model parameters. }
\label{fig-taxonomy}
\end{figure*}
As we mentioned before, the CE ranking will be applicable to different ranking problems by plugging in corresponding utilities. For example, if we plug in relevance as utility ($U(e)$ in Equation~\ref{eqn-RF}), the ranking function is for the documents, whereas if we plug in cost per click of ads, the ranking function is for ads. Further, we may assume specific constraints on one or more of the three parameters of CE ranking (e.g. $\forall_i\gamma(e_i)=0$). On these assumptions, CE ranking will suggest number of reduced ranking functions with specific applications. These substitutions and reductions can be enumerated as a taxonomy of ranking functions.

We show the taxonomy in Figure~\ref{fig-taxonomy}. The three top branches of the taxonomy ($U(e)=R(d)$, $U(e)=\$(a)$, and $U(e)=v(a)$ branches) are document, ad ranking maximizing search engine profit, and ad ranking maximizing advertisers revenue respectively. These  branches correspond to the substitution of utilities by document relevance, CPC, and private value of the advertisers. The sub-trees below these branches are the further reduced cases of these three main categories. The solid lines in  Figure~\ref{fig-taxonomy} denote already known functions, while the dotted lines are the new ranking functions suggested by  CE ranking. Sections~\ref{subsec-docRanking}, Section~\ref{subsec-adRanking}, and Section~\ref{subsec-socialOptimaRanking} below  discuss the further reductions of document ranking, search engine optimal ad ranking, and social optimal ad ranking respectively.

\subsection{Document Ranking}
\label{subsec-docRanking}
For the document ranking the utility of ranking is the probability of relevance of the document. Hence by substituting the document relevance---denoted by $R(d)$---in  Equation~\ref{eqn-RF} we get
\begin{equation} CE(d)=\frac{C(d)R(d)}{C(d)+\gamma(d)}\label{eq-RFDoc}\end{equation}
This function suggests the general optimal relevance ranking for the documents. We discuss some intuitively valid assumptions on user model for the document ranking and the corresponding ranking functions below. The three assumptions discussed below correspond to the three branches under Document Ranking subtree in Figure~\ref{fig-taxonomy}.

\medskip
\noindent{\bf  Sort by Relevance (PRP): }We elucidate two sets of assumptions under which the $CE(d)$ in Equation~\ref{eq-RFDoc} will reduce to PRP.

First  assume that the user has infinite patience, and never abandons results (i.e. $\gamma(d) \approx 0$).
Substituting this  assumption in Equation~\ref{eq-RFDoc},
\begin{equation} CE(d)\approx\frac{R(d)C(d)}{C(d)}=R(d)\label{eq-PRP}\end{equation}
which is exactly the ranking suggested by PRP.

In other words, scenarios in which the  user has infinite patience and never abandons checking the results (i.e. the user leaves browsing the results only by clicking a result) the PRP  is  still optimal.

Second set of slightly weaker assumptions under which the $CE(d)$ will reduce to PRP is
\begin{enumerate}
\item $C(d) \approx R(d)$.
\item Abandonment probability $\gamma(d)$ is negatively proportional to the document relevance i.e. $\gamma(d) \approx k-R(d)$, where $k$ is a constant  between one and zero. This assumption corresponds to the intuition that the higher the perceived relevance of the current result, the less likely is the user to abandon the search.
 \end{enumerate}
 Now the $CE(d)$  reduces to,
\begin{equation} CE(d) \approx \frac{R(d)^2}{k}\label{eq-PRP2}\end{equation}
Since this function is strictly increasing with $R(d)$, ordering by $R(d)$ results in the same ranking as suggested by the function. This  implies that PRP is optimal under these assumptions also.

 Abandonment  decreasing with perceived relevance is a more intuitively valid assumption than the infinite patience assumption above.

\medskip
\noindent{\bf  Ranking Considering Perceived Relevance: }Recently click log studies effectively assesses perceived relevance of document search snippets~\cite{yue-beyond,clarke:captionsfeatures}. But how to use the perceived relevance for improved document ranking is unknown. We show that  depending on the nature of abandonment probability $\gamma(d)$,  the optimal ranking considering perceived relevance differs.

If we assume that $\gamma(d) \approx 0$ in Equation~\ref{eq-RFDoc}, the optimal perceived relevance ranking  is the same as that suggested by  PRP as we have seen in Equation~\ref{eq-PRP}.

On the other hand, if we assume that the abandonment probability is negatively proportional to the perceived relevance ($\gamma(d)=k-C(d)$) as above, the optimal ranking considering perceived relevance is
\begin{equation} CE(d) \approx \frac{C(d)R(d)}{k} \propto C(d)R(d)\label{eq-RFPerc2}\end{equation}
i.e. sorting in the order of product of document relevance and perceived relevance is optimal under these assumptions.
The assumption of abandonment probabilities negatively proportional to relevance is  more realistic than infinite patience assumption as we discussed above. This discussion shows that by estimating nature of abandonment probability, one would be able to decide on  optimal perceived relevance ranking.

\medskip
\noindent{\bf  Ranking Considering Abandonment: } We  consider the ranking considering abandonment probability $\gamma(d)$, with the assumption that the perceived relevance is approximately equal to the actual relevance.
In this case the $CE(d)$ becomes,
\begin{equation} CE(d)\approx \frac{R(d)^2}{R(d)+\gamma(d)}\label{eq-RFAband}\end{equation}
Clearly this is not a strictly increasing function with $R(d)$. This means that ranking considering abandonment is different from PRP ranking, even if we assume that the perceived relevance is equal to the actual relevance. On the assumption  that $\forall_d\gamma(d)=0$, the abandonment ranking becomes same as PRP.

\subsection{Optimal Ad Ranking for Search Engines}
\label{subsec-adRanking}
For the paid placement of ads, the utility of ads to the search engine are  Cost Per Click (CPC) of ads.  Hence, by substituting the CPC of the ad---denoted by $P(a)$--- in  Equation~\ref{eqn-RF} we get
\begin{equation} CE(a)=\frac{C(a)P(a)}{C(a)+\gamma(a)}\label{eq-RFAd}\end{equation}
Thus this function suggests the general optimal ranking for the ads. Please recall that the perceived relevance $C(a)$ is the same as the Click Through Rate (CTR) used for ad placement~\cite{richardson:clickpred}.

In the following subsections demonstrate how the general ranking presented reduces to the currently used ad placement  strategies under appropriate assumptions. We will
show that they  all correspond to the  specific assumptions on the abandonment probability $\gamma(a)$. These two functions below correspond to the two branches under the SE Optimal ad Placement subtree in Figure~\ref{fig-taxonomy}.

\medskip
\noindent{\bf Ranking by Bid Amount:}  The sort by bid amount ranking was
 used by  Overture Services (and was later used by Yahoo! for a while after their acquisition of Overture).
Assuming  that the user never abandons browsing (i.e.
$\forall_a \gamma( a ) = 0 )$, Equation~\ref{eq-RFAd} reduces to
\begin{equation}CE( a) = P(a)\label{eq-adOverture}\end{equation}
This means that the ads are ranked purely in terms of their payment\footnote{Overture ranking is by bid amount, which is different from payment in a second price auction (since payment will be next higher bid amount). But both will result in the same ranking.}.

When $\gamma ( a ) = 0$, we essentially have a user with infinite
patience who will keep browsing downwards until he finds the relevant
ad. So, to maximize profit, it makes perfect sense to rank ads by bid
amount. More generally, for small abandonment probabilities, ranking
by bid amount  is near optimal. Note that this ranking is isomorphic to PRP ranking discussed above for document ranking, since both are ranking based only on utilities.

\medskip
\noindent{\bf Ranking by Expected Profit:} Google and Microsoft are purported to be placing the ads in the order of expected profit based on product of CTR ($C(a)$ in $CE$) and bid amount ($P(a))$~\cite{richardson:staticpr}. The ranking is part of the well known Generalized Second Price (GSP) auction mechanism. If we approximate abandonment probability as negatively proportional to the CTR of the ad (i.e. $\forall_a\gamma(a)=k-C(a)$) ,  the Equation~\ref{eq-RFAd} reduces to,
 \begin{equation}CE( a ) \approx \frac{ P(a)R(a)}{k} \propto P(a)R(a) \label{eq-RFAdGoogle}\end{equation}
This shows that ranking ads by  their stand-alone expected profit is near optimal as long as the abandonment probability is negatively proportional to the
relevance.\footnote{Google mechanism---GSP---uses bid amount of the advertisers (instead of CPC in Equation~\ref{eq-RFAdGoogle}) for ranking. We will show that both will result in the same ranking by an order preserving property of the GSP pricing in Section~\ref{sec-pricing}.} Note that this ranking is isomorphic to the perceived relevance ranking of the documents discussed above.

\subsection{Revenue Optimal Ad Ranking}
\label{subsec-socialOptimaRanking}
An important property of the auction mechanism is the expected revenue---which is the sum of the profits of the advertisers and search engine. To analyze advertisers profit, a private value model is commonly used. Each advertiser is assigned with a private value for the click equal to the expected revenue from the click. Advertisers pay a fraction of this revenue to the search engine depending on the pricing mechanism.  The profit for  advertisers is the difference  between the private value and payment to the search engine. The profit for the search engine is the payment of the advertisers. Consequently, the revenue is the sum of the profits of all the parties---search engine and the advertisers.

The Advertiser Social Optima branch in Figure~\ref{fig-taxonomy} corresponds to the ranking to maximize  total revenue. Private value  of advertisers $a_i$ is denoted as---$v(a_i)$. By substituting the utility by private values in Equation~\ref{eqn-RF} we get,
\begin{equation} CE(d)=\frac{C(a)v(a)}{C(a)+\gamma(a)}\label{eq-RFSocial}\end{equation}
If the ads are ranked in this order, the ranking will guarantee maximum revenue. This result is already known, as Agarwal~\emph{et al.}~\cite{aggarwal-sponsored} and Kempe and Mahdian~\cite{kempe2008cascade} independently analyzed the auctions to maximize advertiser revenue and proposed similar ranking function.

In Figure~\ref{fig-taxonomy} the two left branches of revenue maximizing subtree (labeled $\gamma(a)=0$ and $\gamma(a)=k-C(a)$) correspond to the assumption of no abandonment, and abandonment probabilities negatively proportional to the click probability respectively. These two cases are isomorphic to the Overture and Google ranking discussed  in Section~\ref{subsec-adRanking} above. We discuss further on revenue maximizing ranking in conjunction with a pricing mechanism in Section~\ref{sec-pricing}

\section{Applying $CE$ Ranking for Ad placement}
\label{sec-pricing}
We have shown that $CE$ ranking maximizes the profits for search engines for given CPCs in Section~\ref{subsec-adRanking}. In ad placement, the net profits  of ranking to the search engine can only be analyzed in association with a pricing mechanism. To this end, we introduce a pricing to be used with the $CE$ designing a full auction mechanism. Subsequently, we analyze the properties of the  mechanism.

 To describe the dynamics of ad auctions briefly, the search engine decides the ranking and pricing (cost per click) of  the ads based on the bid amounts of the advertisers. Generally the pricing  is not equal to the bid amount of advertisers, but is instead derived based on the bids~\cite{easley2010networks,edelman2005internet,aggarwal2006truthful}. In response to these ranking and pricing strategies, the advertisers (more commonly, the software agents of the advertisers) may change their bids to maximize their profits. They may change bids hundreds of times a day. Eventually, the bids will stabilize at a fixed point where no advertiser  can increase his profit by unilaterally changing his bid. This set of bids corresponds to a Nash Equilibrium of the auction mechanism. Hence the realistic profits of a search engine will be the profits corresponding to the Nash Equilibrium.

 The next section discusses  the properties of any mechanism based on the user model in Figure~\ref{flowgraph-fig}---independent of the ranking and pricing strategies. In Section~\ref{subsec-pricingStrat}, we introduce a pricing mechanism and analyze its properties including the equilibrium.
\subsection{Pricing Independent Properties}
\label{subsec-pricingIndAnalysis}
In this section we illustrate properties arising based on the user browsing model in Figure~\ref{flowgraph-fig}, not assuming any pricing or ranking strategy. One of the  basic results is
\newtheorem{remark}{Remark}
\begin{remark}
In any equilibrium the payment by the advertisers is less  than or equal to their private values (i.e. individual rationality of the bidders is maintained). \label{rem-IndRationality}
\end{remark}
If this is not true, advertiser may opt out from the auction by bidding zero and increase the profit, violating the assumption of an equilibrium.

\begin{remark}
In any equilibrium, price paid  by an advertiser increases monotonically as he  moves up in the ranking unilaterally.
\end{remark}
From the browsing model, click probability of the advertisers in non-decreasing as he moves up in the position. Unless the price increases monotonically, advertiser can increase  his profit by moving up, violating the assumption of an equilibrium.

Note that the proposed model is a general case of the positional auctions model by Varian~\cite{varian2007position}. Positional auctions assume static click probabilities for each position independent of the other ads. We assume realistic dynamic click probabilities depending on the ads above. Due to these externalities, the model is more complex and does  not hold many of the  properties derived by Varian~\cite{varian2007position} (e.g. monotonically increasing values and prices with positions).

\begin{remark}Irrespective of the ranking and pricing, the sum of revenues of the advertisers is upper bounded at
\begin{equation}
E(V) = \sum_{i=1}^{N}v(a_i) C(a_i) \prod_{j=1}^{i-1} \left[ 1 - ( C(a_j) + \gamma(a_j)) \right] \label{eqn-maxRevenue}
\end{equation}
when the advertisers are ordered by $\frac{C(a)v(a)}{C(a)+\gamma(a)}$. Further, this is an upper bound for the search engine profit.
\label{rem-revenueUpperBound}
\end{remark}
This result directly follows from the Advertisers Social Optima branch in Figure~\ref{fig-taxonomy}, and Equation~\ref{eq-RFSocial}.

The revenue is shared among the advertisers and search engine. For each click, advertisers get a revenue equal to the private value $v(a)$ and pay a fraction equal to the CPC (set by the search engine pricing strategy) to the search engine. The total payoff for the search engine is the sum of the payments by the advertisers. Conversely,  total payoff to the advertisers is the difference between total revenue and payoff to the search engine. Since the suggested order above in Remark~\ref{rem-revenueUpperBound} maximizes revenue, which is the sum of the payoffs of all the players (search engine and the advertisers), this is a socially optimal order and the revenue realized is the socially optimal revenue.

A corollary of the social optimality combined with the individual rationality result in Remark~\ref{rem-IndRationality} is that,
\begin{remark}
The quantity $E(V)$ in Remark~\ref{rem-revenueUpperBound} is an upper bound for the search engine profit irrespective of the ranking and pricing mechanism.
\label{rem-profitUpperBound}
\end{remark}

Social optimal revenue can be realized only if the ads are in the descending order of $\frac{C(a)v(a)}{C(a)+\gamma(a)}$. Social optimum  is desirable for search engines, since this will increase the payoffs for advertisers for the same CPC. Increased payoffs will increase the advertiser's incentive to advertise with the search engine and will increase business for the search engine in long term.

Since search engines do not know the private value of the advertisers (note that search engine perform the ranking), social optimal ranking based on private values is not directly feasible. We need to design a mechanism having an equilibrium coinciding with the social optimality. This will motivate advertisers towards bids coinciding with social optimal ordering. In addition to social optimality, it is highly desirable for the mechanism to be based on CE ranking to simultaneously maximize advertiser's revenue and  search engine profit. In the following section we propose such a mechanism using CE ranking and prove the existence of an equilibrium in which the CE ranking coincides with the social optimal allocation.

\subsection{Pricing and Equilibrium}
\label{subsec-pricingStrat}
In this section, we define a pricing strategy to  use with the CE ranking, and analyze properties of the resulting mechanism.

For defining the pricing strategy, we define the pricing order as the decreasing order of  $w(a)b(a)$, where $w(a)$ is,
\begin{equation}w(a) = \frac{C(a)}{C(a)+\gamma(a)} \label{eq-pricingOrder}\end{equation}
In this pricing order, we denote  the $i^{th}$ advertiser's $w(a_i)$ as $w_i$, $C(a_i)$  as $c_i$, $b(a_i)$  as $b_i$, and abandonment probability $\gamma(a_i)$ as $\gamma_i$ for convenience. Let $\mu_i= c_i+\gamma_i$. For each click advertiser $a_i$ is charged with a price $p_i$ (CPC) equal to the minimum bid required to maintain its position in the pricing order,
\begin{equation}p_i =\frac{w_{i+1}b_{i+1}}{w_i} = \frac{b_{i+1}c_{i+1}{\mu_i}}{\mu_{i+1}c_i} \label{eq-nextPricePricing} \end{equation}

Substituting $p_i$ in Equation~\ref{eq-RFAd} for the ranking order,  CE of the $i^{th}$ advertiser is,
\begin{equation}CE_i = \frac{p_ic_i}{\mu_i}\end{equation}

This proposed mechanism preserves the pricing order in the ranking order as well, i.e.
\begin{thm} The order by $w_ib_i$ is the same as the order by $CE_i$ for the auction i.e.
\begin{equation} w_i b_i \ge w_jb_j  \Longleftrightarrow CE_i \ge CE_j  \end{equation}
\label{thm-CEsameWi}
\end{thm}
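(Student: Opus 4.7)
The plan is to reduce the claim to a short algebraic identity and then let monotonicity of the pricing order do the remaining work. Concretely, I want to show that $CE_i$ simplifies to $w_{i+1} b_{i+1}$; once that identity is in hand, the iff is immediate from the definition of the pricing order.

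First I would substitute the pricing formula from Equation~\ref{eq-nextPricePricing} into $CE_i = p_i c_i / \mu_i$, noting that the factors $c_i$ and $\mu_i$ cancel cleanly, and then recognize the remaining ratio $c_{i+1}/\mu_{i+1}$ as $w_{i+1}$ by Equation~\ref{eq-pricingOrder}. This yields the identity
\[ CE_i \;=\; \frac{c_{i+1}\, b_{i+1}}{\mu_{i+1}} \;=\; w_{i+1} b_{i+1}, \]
i.e., the $CE$ of the advertiser at pricing-order position $i$ coincides with the pricing-order score of the advertiser one slot below.

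Next I would use this identity to close both directions of the iff. For the forward direction, $w_i b_i \ge w_j b_j$ means $i \le j$ in the pricing order by construction, so $i+1 \le j+1$; monotonicity of the pricing-order scores then gives $w_{i+1} b_{i+1} \ge w_{j+1} b_{j+1}$, which is $CE_i \ge CE_j$. The converse runs the identical chain backwards. So the two orderings agree; in fact the $CE$ sequence is just the pricing-order sequence shifted by one index.

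The one thing that needs a touch of care is the bottom of the list: Equation~\ref{eq-nextPricePricing} references an $(N{+}1)$-th advertiser, and for the comparison to be well-defined at position $N$ one needs a convention for $w_{N+1} b_{N+1}$ (a reserve price, or a phantom bidder with score zero, is the standard fix). This is pure bookkeeping rather than a real obstacle, and beyond it the proof is essentially a single cancellation, so I do not anticipate any substantive technical difficulty.
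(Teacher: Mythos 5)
Your proposal is correct and follows essentially the same route as the paper: index advertisers by the pricing order, cancel $c_i/\mu_i$ against the pricing formula to obtain the identity $CE_i = w_{i+1}b_{i+1}$, and conclude that the $CE$ sequence is the pricing-order sequence shifted by one, hence monotone in the same order. Your remark about needing a convention for the phantom $(N{+}1)$-th bidder is a small point the paper leaves implicit (it later assumes $b_i=0$ for $i>N$), but it does not change the argument.
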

Proof is given in the Appendix~\ref{appendix-CESameWi}. This order preservation property implies that the final ranking is the same as that based on bid amounts. As a corollary, the CPC is equal to the minimum amount the advertisers have to pay to maintain his position in the ranking order as well.

Further we show below that any advertisers CPC is less or equal to his bid.
\newtheorem{lemma}{Lemma}\begin{lemma}[(Individual Rationality)] The payment $p_i$ of any advertiser is less or equal to his bid amount.\label{thm-bidMaxPay}\end{lemma}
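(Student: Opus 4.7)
The plan is to derive the lemma essentially directly from the definition of the pricing order together with the formula for $p_i$, so the proof should be quite short and the main work is just making sure the inequality signs line up.

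First I would recall the two ingredients. By construction, the pricing order lists advertisers in non-increasing order of the quantity $w_i b_i$, so in particular for any index $i$ with an advertiser below it we have
\begin{equation*}
w_i b_i \;\ge\; w_{i+1} b_{i+1}.
\end{equation*}
Second, the per-click price charged to $a_i$ is defined (Equation~\ref{eq-nextPricePricing}) as
\begin{equation*}
p_i \;=\; \frac{w_{i+1} b_{i+1}}{w_i}.
\end{equation*}
Assuming $w_i>0$ (which holds whenever $c_i>0$, and otherwise the advertiser receives no clicks and the claim is vacuous), dividing the ordering inequality by $w_i$ yields $b_i \ge w_{i+1}b_{i+1}/w_i = p_i$, which is exactly what we need.

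The only remaining case is the last advertiser in the pricing order, who has no $a_{i+1}$. I would handle this in one sentence by adopting the standard convention of a phantom bid $b_{N+1}=0$ (or equivalently treating $p_N=0$), so that the formula for $p_N$ is well-defined and trivially satisfies $p_N \le b_N$ since bids are non-negative. I would also briefly note that Theorem~\ref{thm-CEsameWi} justifies talking about ``his position in the ranking order'' interchangeably with ``his position in the pricing order,'' so the lemma really is a statement about the mechanism as advertisers experience it.

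I do not anticipate a real obstacle: the individual rationality of the mechanism is built into the definition of $p_i$ as the ratio $w_{i+1}b_{i+1}/w_i$, precisely analogous to how the second-price rule in GSP guarantees that the payment never exceeds the bid. The only thing to be careful about is not to invoke any equilibrium hypothesis --- the lemma is a structural property of the pricing rule valid for every bid profile, so the proof should proceed purely from the definitions without reference to bidder incentives.
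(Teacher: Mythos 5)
Your proof is correct and is essentially the same argument as the paper's: both reduce the claim to the pricing-order inequality $w_i b_i \ge w_{i+1} b_{i+1}$ combined with the definition $p_i = w_{i+1}b_{i+1}/w_i$ (the paper phrases the ordering condition as $CE_i \ge CE_{i+1}$, which Theorem~\ref{thm-CEsameWi} shows is equivalent). Your explicit handling of the bottommost advertiser and the $w_i>0$ caveat are minor tidy additions the paper omits.
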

\begin{proof}
\begin{displaymath} p_i = \frac{b_{i+1}c_{i+1}\mu_i}{\mu_{i+1}c_i} =  \frac{b_{i+1}c_{i+1}}{\mu_{i+1}}\frac{\mu_i}{c_ib_i} b_i \le b_i \mbox{(since $CE_i \ge CE_{i+1}$)} \end{displaymath}
\end{proof}
This means advertisers will never have to pay more than bid, similar to GSP. This nice property makes it easy for the advertiser to decide his bid.

Interestingly, this mechanism also is a general case of the existing mechanisms, as in the case of CE ranking. In particular, the mechanism reduces to GSP (Google mechanism) and Overture mechanisms on the same assumptions on which CE ranking reduces to respective rankings (described in Section~\ref{subsec-adRanking}).

\begin{lemma}
The mechanism reduces to Overture ranking with second price auction on the assumption $\forall_{i} \gamma_i=0$\label{lemma-overRanking}
\end{lemma}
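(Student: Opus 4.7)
The plan is to carry out a direct substitution of the assumption $\forall_i\, \gamma_i = 0$ into each of the three defining ingredients of the mechanism (the ranking criterion, the pricing order, and the payment formula) and verify that each collapses to the corresponding Overture rule. The lemma is essentially a sanity check that the generalized mechanism specializes correctly, so I do not anticipate any substantive obstacle; the work is purely a matter of simplification.

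First, I would simplify the auxiliary quantities introduced before Equation~\ref{eq-nextPricePricing}. Under $\gamma_i = 0$, we get $\mu_i = c_i + \gamma_i = c_i$, and consequently
\begin{equation*}
w_i \;=\; \frac{c_i}{c_i + \gamma_i} \;=\; 1 \qquad \text{for every } i.
\end{equation*}
Hence the pricing order, which ranks advertisers by $w_i b_i$, reduces to ranking by $b_i$ alone — precisely the Overture ordering by bid amount. By Theorem~\ref{thm-CEsameWi}, the pricing order coincides with the $CE$ ranking order, so the allocation produced by the mechanism is exactly the Overture allocation.

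Next, I would plug $\gamma_i = 0$ into the payment rule of Equation~\ref{eq-nextPricePricing}:
\begin{equation*}
p_i \;=\; \frac{b_{i+1}\, c_{i+1}\, \mu_i}{\mu_{i+1}\, c_i} \;=\; \frac{b_{i+1}\, c_{i+1}\, c_i}{c_{i+1}\, c_i} \;=\; b_{i+1}.
\end{equation*}
Thus each advertiser pays the bid of the advertiser immediately below him in the order, which is exactly the second-price auction rule. Combining the two observations — ranking by bid amount and charging the next bid — we recover the Overture mechanism with second-price pricing, as claimed. The only care needed is to note that $c_i > 0$ is required to cancel the factors in the payment formula, which is a standing assumption since an advertiser with zero click probability generates no clicks and is irrelevant to the auction.
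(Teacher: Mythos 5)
Your proposal is correct and follows essentially the same route as the paper's own proof: substitute $\gamma_i=0$ to get $w_i=1$, conclude the pricing/ranking order is by bid amount, and simplify the payment formula to $p_i=b_{i+1}$, invoking Theorem~\ref{thm-CEsameWi} to identify the $CE$ order with the bid order. Your added remark that $c_i>0$ is needed to cancel factors is a harmless refinement the paper leaves implicit.
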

\begin{proof}
This assumption implies
\begin{eqnarray} w_i &= & 1  \nonumber \\
 & \Rightarrow & p_i = b_{i+1}\;\mbox{(second price auction)}\nonumber \\
 & \Rightarrow &  CE_i =  b_{i+1} \equiv b_i \;\mbox{(as  ranking is the the same)} \nonumber
\end{eqnarray}
\end{proof}

\begin{lemma}The mechanism reduces to GSP on assumption $\forall_{i} \gamma_i=k-c_i$\label{lemma-overRanking}\end{lemma}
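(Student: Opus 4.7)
The plan is to substitute $\gamma_i = k - c_i$ directly into the ranking score and the pricing rule of the proposed mechanism and verify that both collapse to the corresponding GSP formulas. The key observation is that under this assumption $\mu_i = c_i + \gamma_i = k$ becomes a constant independent of the advertiser, so every advertiser-specific factor of $\mu$ either turns into a common scalar (in the ranking) or cancels outright (in the price).

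First I would compute $w_i = c_i/\mu_i = c_i/k$ and note that the pricing order by $w_i b_i$ is then just a positive rescaling of the GSP order by $c_i b_i$. By Theorem~\ref{thm-CEsameWi}, the CE ranking coincides with the pricing order, so the allocation rule already matches GSP's expected-profit ranking without any further work.

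Next I would plug $\mu_i = \mu_{i+1} = k$ into Equation~\ref{eq-nextPricePricing}. The $\mu_i/\mu_{i+1}$ factor cancels and the per-click price reduces to $p_i = b_{i+1}c_{i+1}/c_i$, which is exactly the GSP payment (the minimum bid advertiser $i$ needs to retain position against the next advertiser's effective bid $b_{i+1}c_{i+1}$). Since both the allocation and payment rules coincide, the mechanism reduces to GSP. I do not anticipate a real obstacle—the argument is a direct substitution rather than a genuine derivation; the only item worth noting is that $k \in (0,1)$ must hold for $\gamma_i$ to remain a valid probability, but this is inherited from the setup and does not affect the algebra.
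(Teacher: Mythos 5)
Your proposal is correct and follows essentially the same route as the paper's own proof: substitute $\gamma_i = k - c_i$ so that $\mu_i = k$, observe that the ranking by $w_i b_i$ becomes a rescaling of the GSP order by $c_i b_i$ (invoking Theorem~\ref{thm-CEsameWi} for order preservation), and cancel the $\mu$ factors in Equation~\ref{eq-nextPricePricing} to recover the GSP payment $p_i = b_{i+1}c_{i+1}/c_i$. If anything, your version is slightly more careful than the paper's, which writes $w_i = c_i$ where it should be $w_i = c_i/k$; you correctly note the constant factor and that it is harmless.
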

\begin{proof}
This assumption implies
\begin{eqnarray} w_i &= & c_i  \nonumber \\
 & \Rightarrow & p_i = \frac{b_{i+1}c_{i+1}}{c_i}\;\mbox{(GSP pricing)} \nonumber \\
 & \Rightarrow &  CE_i =  \frac{b_{i+1}c_{i+1}}{k} \equiv \frac{b_{i}c_{i}}{k} \;\mbox{ (by Theorem~\ref{thm-CEsameWi})} \nonumber \\
 & \propto &  b_ic_i\nonumber
\end{eqnarray}
\end{proof}

%
%

This in conjunction with Theorem~\ref{thm-CEsameWi} implies  that GSP ranking by $c_ib_i$ (i.e. by bids) is the same as the ranking by $c_ip_i$ (by CPCs).

Now we will look at the equilibrium properties of the mechanism. Truth telling is not a dominant strategy. This trivially follows, since GSP is a special case of the proposed mechanism, and it is known that for GSP truth telling is not a dominant strategy~\cite{edelman2005internet}. Hence we center our analysis on Nash Equilibrium conditions. 

\begin{thm}[(Nash Equilibrium)]
Without the loss of generality assume that the advertisers are ordered in the decreasing order of $\frac{c_iv_i}{\mu_i}$ where $v_i$ is the private value of the $i^{th}$ advertiser. The advertisers are in an envy free pure strategy Nash Equilibrium if
\begin{equation}b_i = \frac{\mu_i}{c_i}\left[ v_i c_i + (1-\mu_i) \frac{b_{i+1}c_{i+1}}{\mu_{i+1}} \right] \label{eq-nashEqBids} \end{equation}
This equilibrium is socially optimal as well as optimal for search engines for the given CPC's.
\label{thmNash}
\end{thm}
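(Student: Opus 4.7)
Plan. The proof proceeds in three steps: a change of variables simplifies the bid recursion, a case analysis over deviations yields the Nash/envy-free claim, and the two optimality statements follow from Theorem~\ref{thmRF}.

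Set $X_i := b_ic_i/\mu_i = w_ib_i$. Under this normalisation the prescribed bids satisfy the clean recurrence $X_i = v_ic_i + (1-\mu_i)X_{i+1}$ with $X_{N+1}=0$, the pricing rule simplifies to $p_ic_i = \mu_iX_{i+1}$, and if $V_i = \prod_{j<i}(1-\mu_j)$ denotes the view probability at position $i$, then advertiser $i$'s equilibrium profit reads $(X_i - X_{i+1})V_i = (v_ic_i - \mu_iX_{i+1})V_i$. A backward induction on $i$---with $X_{N+1}=0$ as the base and the postulated ordering $v_ic_i/\mu_i \ge v_{i+1}c_{i+1}/\mu_{i+1}$ in the step---shows $X_{i+1} \le v_ic_i/\mu_i$, so the sequence $\{X_i\}$ is decreasing. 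By Theorem~\ref{thm-CEsameWi} the $CE$ order coincides with the order of $w_ib_i = X_i$, so the mechanism indeed allocates advertiser $i$ to position $i$ and the equilibrium profits are non-negative.

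For the Nash/envy-free part, I compute advertiser $i$'s profit $G(k)$ after a unilateral deviation that lands it at position $k \ne i$; such a deviation cascades one-slot shifts through the intermediate advertisers, and the new CPC is set by whoever now sits immediately below the deviator. Straightforward accounting gives $G(k) = (v_ic_i - \mu_iX_k)V_k$ for $k<i$ and $G(k) = (v_ic_i - \mu_iX_{k+1})V_{k+1}/(1-\mu_i)$ for $k>i$, with $G(i) = (v_ic_i - \mu_iX_{i+1})V_i$. Substituting the recurrence for $X$ makes each consecutive difference telescope: for any $k$, $G(k+1) - G(k)$ simplifies to a positive factor times $\mu_i\mu_j\bigl(v_jc_j/\mu_j - v_ic_i/\mu_i\bigr)$, where $j \in \{k,k+1\}$ is the advertiser immediately below the deviator in the cascaded configuration. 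By the assumed ordering this quantity is non-negative for $k<i-1$ and non-positive for $k\ge i$; the boundary transition $G(i)-G(i-1)$ reduces analogously to
\begin{equation*}
V_{i-1}\bigl[(\mu_iv_{i-1}c_{i-1} - \mu_{i-1}v_ic_i) + \mu_i(1-\mu_{i-1})(v_ic_i - \mu_iX_{i+1})\bigr] \ge 0,
\end{equation*}
which holds because of the ordering together with the non-negative equilibrium profit $v_ic_i - \mu_iX_{i+1} \ge 0$ (individual rationality, cf.\ Lemma~\ref{thm-bidMaxPay}). Hence $G$ is maximised at $k=i$, proving the envy-free pure-strategy Nash equilibrium.

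Both optimality claims are then immediate. The equilibrium allocation is in decreasing order of $c_iv_i/\mu_i$, so by Remark~\ref{rem-revenueUpperBound} it maximises the expected total revenue (social optimality). For the search engine at the realised CPCs, Theorem~\ref{thmRF} with utility $U=p$ demands a ranking by $c_ip_i/\mu_i = X_{i+1}$, and the first step has already shown this sequence is decreasing in $i$, so the equilibrium ranking is simultaneously SE-profit optimal. The chief obstacle is the cascade computation in the second step: a deviation re-indexes several advertisers and changes the downstream advertiser setting the CPC, so $G$ has different algebraic forms on the two sides of $k=i$; the change of variables $X_i = b_ic_i/\mu_i$ is precisely what makes both sides telescope to the same signed expression, which is then dispatched by the ordering.
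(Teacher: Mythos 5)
Your proof is correct and follows essentially the same route as the paper's: first establish that the prescribed bids preserve the $v_ic_i/\mu_i$ order under $w_ib_i$ (your decreasing-$X_i$ induction is the paper's convex-combination argument in Step~1), then rule out deviations by comparing adjacent landing positions, where each consecutive difference reduces to the ordering assumption and the one-slot-up move needs the extra individual-rationality term (your $v_ic_i-\mu_iX_{i+1}\ge 0$ is equivalent to the paper's $X_{i+1}\le X_i$). The $X_i=w_ib_i$ change of variables and the closed-form deviation profit $G(k)$ are a cleaner packaging of the paper's term-by-term inductive cancellations, but not a different argument.
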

\emph{Proof Sketch:}
The inductive proof shows that for these bid values, no advertiser can increase his profit by moving up or down in the ranking. The full proof is  given in Appendix~\ref{appedixNash}.
$\Box$

 We do not rule out the existence of multiple equilibria. The stated equilibrium is particularly  interesting, due to the simultaneous social optimality and search engine optimality.

The following remarks show that equilibria of other placement mechanisms are  reduced cases of the proposed CE equilibrium, as a natural consequence of its generality. The stated equilibrium reduces to equilibriums in Overture mechanism and GSP under the same assumptions under which the ranking reduces to respective rankings.

\begin{remark}
The bid values
\begin{equation}b_i =  v_i c_i + (1-c_i) b_{i+1}  \end{equation}
 are a pure strategy Nash Equilibrium in Overture mechanism. This corresponds to the  substitution of the assumption  $\forall_{i} \gamma_i=0\;(i.e.\ \mu_i = c_i)$ in Theorem~\ref{thmNash}.
\label{rem-overtEquSpecialCase}
\end{remark}
The proof  follows from  Theorem~\ref{thmNash} as both pricing and ranking is shown to be a special case of our proposed mechanism.

Similarly for GSP,
\begin{remark}
The bid values
\begin{equation}b_i =  v_i k  + (1-k) b_{i+1}c_{i+1}  \end{equation}
is a pure strategy Nash Equilibrium in GSP mechanism.
\label{rem-overtEquSpecialCase}
\end{remark}
 This equilibrium corresponds to the  substitution of the assumption  $\forall_{i} \gamma_i=k-c_i\;(1 \ge k \ge 0)$ in Theorem~\ref{thmNash}. Since this is a special case, the proof for Theorem~\ref{thmNash} is sufficient.


\subsection{Comparison with VCG mechanism}
We compare the revenue and equilibrium of  $CE$ mechanism with those of VCG~\cite{vickrey1961counterspeculation,clarke1971multipart,groves1973incentives}.
VCG auctions combine an optimal allocation (ranking) with VCG pricing.  VCG payment of a bidder is equal to the reduction of revenues of other bidders due to the presence of the bidder. A well known property is that VCG pricing with any socially optimal allocation has truth telling as the the dominant strategy equilibrium.

In the context of online ads, ranking optimal with respect to the bid amounts is socially optimal ranking for VCG. This optimal ranking is $\frac{b_ic_i}{\mu_i}$; as directly implied by the Equation~\ref{thmRF} on substituting $b_i$ for utilities. Hence this ranking combined with VCG pricing has truth telling as the dominant strategy equilibrium. Since  $b_i=v_i$ at the dominant strategy equilibrium,  ranking is socially optimal for an advertiser's true value as suggested in Equation~\ref{eq-RFSocial}.

The CE ranking function is different from VCG since  CE ranking by payments optimizes search engine profits. On the other hand,  VCG ranks by bids optimizing advertiser's profit. But the Theorem~\ref{thm-CEsameWi} shows that for the pricing used in $CE$, ordering of $CE$ is the same as that of VCG. This order preserving property facilitates  comparison of $CE$ with VCG. The theorem below shows revenue dominance of CE over VCG for the same bid values of advertisers.

\begin{thm}[(Search Engine Revenue Dominance)]
For the same bid values for all the advertisers, the revenue of search engine by $CE$ mechanism is greater or equal to the revenue by VCG.
\label{thm-CERevnueDominanceVCG}
\end{thm}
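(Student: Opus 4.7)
The plan is to reduce the comparison to a per-click payment comparison and then prove the per-click dominance by a reverse induction on position. First I would observe that for any common bid vector both mechanisms produce the \emph{same} allocation: VCG's socially optimal allocation ranks by $b_ic_i/\mu_i$ (this is what Equation~\ref{thmRF} yields when we substitute $b_i$ for utilities), and by Theorem~\ref{thm-CEsameWi} the $CE$ mechanism orders bidders by the same quantity. Hence the click probabilities at every slot are identical under the two mechanisms, and the revenue comparison collapses to comparing per-click prices $p_i^{CE}$ and $p_i^{VCG}$ bidder by bidder.

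Next I would write out the VCG per-click price explicitly. With $\rho_j=\prod_{k=1}^{j-1}(1-\mu_k)$ denoting the view probability at position $j$, removing bidder $i$ shifts each subsequent bidder $j>i$ up one slot and scales its click probability by $1/(1-\mu_i)$. Summing the welfare increase to the others and dividing by bidder $i$'s own click probability $c_i\rho_i$ gives, after cancellation,
\begin{equation*}
p_i^{VCG} \;=\; \frac{\mu_i}{c_i}\sum_{j>i} b_j c_j \prod_{k=i+1}^{j-1}(1-\mu_k)
\;=\;\frac{\mu_i}{c_i}\,S_{i+1},
\end{equation*}
where $S_j := \sum_{\ell\ge j} b_\ell c_\ell \prod_{k=j}^{\ell-1}(1-\mu_k)$ satisfies the recursion $S_j = b_jc_j + (1-\mu_j)S_{j+1}$. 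Since the $CE$ price from Equation~\ref{eq-nextPricePricing} rewrites as $p_i^{CE}=\frac{\mu_i}{c_i}\cdot\frac{b_{i+1}c_{i+1}}{\mu_{i+1}}$, the entire theorem reduces to proving the single inequality
\begin{equation*}
\frac{b_j c_j}{\mu_j} \;\ge\; S_j \qquad \text{for every } j.
\end{equation*}

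I would establish this by reverse induction on $j$. The base case $j=N$ reads $b_Nc_N/\mu_N \ge b_Nc_N$, which is immediate because $\mu_N=c_N+\gamma_N\le 1$. For the inductive step, assume $b_{j+1}c_{j+1}/\mu_{j+1}\ge S_{j+1}$. The ranking hypothesis $b_jc_j/\mu_j \ge b_{j+1}c_{j+1}/\mu_{j+1}$ (which is exactly what the $CE/$VCG ordering guarantees) then yields $b_jc_j/\mu_j\ge S_{j+1}$, and multiplying both sides by $(1-\mu_j)\ge 0$ and adding $b_jc_j$ gives
\begin{equation*}
\frac{b_jc_j}{\mu_j} \;=\; b_jc_j + \frac{(1-\mu_j)b_jc_j}{\mu_j} \;\ge\; b_jc_j + (1-\mu_j)S_{j+1} \;=\; S_j,
\end{equation*}
closing the induction. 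Combined with the equality of allocations (hence click probabilities), this gives $p_i^{CE}\ge p_i^{VCG}$ for every $i$, and summing the slot revenues proves the theorem. The main obstacle I anticipate is the bookkeeping for VCG when a middle bidder is removed (tracking how $\rho_j$ rescales by $1/(1-\mu_i)$ and keeping the telescoping clean); once the recursion $S_j = b_jc_j+(1-\mu_j)S_{j+1}$ is in hand, the induction itself is essentially one line and rests entirely on the ordering already established in Theorem~\ref{thm-CEsameWi}.
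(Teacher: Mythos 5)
Your proposal is correct and follows essentially the same route as the paper's proof: both establish that the two mechanisms induce the same allocation, write the VCG per-click price as $\frac{\mu_i}{c_i}\sum_{j>i}b_jc_j\prod_{k=i+1}^{j-1}(1-\mu_k)$, and then compare it to $p_i^{CE}$ slot by slot via a bottom-up recursion driven by the ordering from Theorem~\ref{thm-CEsameWi}. The only difference is cosmetic: where the paper closes the induction by a convex-combination observation and an appeal to VCG individual rationality ($b_i\ge p_i^V$), you prove that inequality explicitly through the recursion $S_j=b_jc_j+(1-\mu_j)S_{j+1}$, making the argument slightly more self-contained.
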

\emph{Proof Sketch:}
The proof is an induction  based on the fact that the ranking by CE and VCG are the same, as mentioned above. Full proof is given in Appendix~\ref{appendix-CERevnueDominanceVCG}.
$\Box$

This theorem shows that the CE mechanism is likely to provide higher revenue to the search engine even during transient times before the bids  settle on equilibriums.

Based on Theorem~\ref{thm-CERevnueDominanceVCG} we prove revenue equivalence of the proposed $CE$ equilibrium with dominant strategy equilibrium of VCG.

\begin{thm}[(Equilibrium Revenue Equivalence)]
At the equilibrium in  Theorem~\ref{thmNash}, the revenue of search engine  is equal to the revenue of the truthful dominant strategy equilibrium of VCG.
\label{thm-VCGrevnueEqiv}
\end{thm}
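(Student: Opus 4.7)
\emph{Proof proposal:}

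My plan is to show that both mechanisms give rise to a common closed-form expression for the search engine's expected revenue at their respective equilibria. Two observations drive the argument. First, by Theorem~\ref{thmNash} the $CE$ Nash equilibrium sorts advertisers in decreasing order of $c_iv_i/\mu_i$, and since VCG is truthful in dominant strategies under its welfare-maximizing allocation, $b_i=v_i$ induces the same ranking; hence the two mechanisms produce identical allocations and identical total welfare $\sum_i v_i c_i \alpha_i$, where $\alpha_i := \prod_{k<i}(1-\mu_k)$ is the view probability of slot $i$. Second, because search engine revenue equals total welfare minus the advertisers' aggregate payoff, establishing revenue equivalence reduces to showing that the advertisers' total expected payments coincide.

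For the $CE$ side, I would introduce $x_i := b_ic_i/\mu_i$, so that the equilibrium relation of Theorem~\ref{thmNash} becomes the linear recursion $x_i = v_ic_i + (1-\mu_i)x_{i+1}$ with terminal condition $x_{N+1}=0$. Substituting the $CE$ price formula into the expected payment of advertiser $i$ and using the identity $\mu_i\alpha_i = \alpha_i-\alpha_{i+1}$ gives an expected payment of $(\alpha_i-\alpha_{i+1})x_{i+1}$. Unrolling the recursion yields $x_{i+1} = \sum_{k>i} v_k c_k \alpha_k / \alpha_{i+1}$; substituting and swapping the order of summation puts the total $CE$ revenue in the form
\[R_{CE} \;=\; \sum_{k}v_kc_k\alpha_k\sum_{i<k}\frac{\mu_i}{1-\mu_i}.\]

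For the VCG side, I would compute each bidder's truthful VCG payment as the externality he imposes on lower-ranked advertisers. Removing advertiser $i$ leaves every $j<i$ untouched and shifts every $j>i$ up by one slot, scaling the click probability of $j$ from $c_j\alpha_j$ to $c_j\alpha_j/(1-\mu_i)$. Bidder $i$'s total expected VCG payment is therefore $\sum_{j>i} v_j c_j \alpha_j \cdot \mu_i/(1-\mu_i)$; summing over $i$ and swapping the order of summation yields exactly the same double sum, so $R_{VCG}=R_{CE}$.

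The main obstacle is the algebraic bookkeeping: unrolling the Nash recursion for $x_i$ cleanly (which needs the terminal condition $x_{N+1}=0$, justified by the absence of an $(N{+}1)$-th advertiser), and on the VCG side carefully tracking how each $\alpha_j$-factor responds to the removal of a single advertiser from the ranking. Once both expressions are brought to the common form $\sum_k v_kc_k\alpha_k \sum_{i<k}\mu_i/(1-\mu_i)$, the revenue equivalence is immediate.
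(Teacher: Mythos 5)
Your proposal is correct, and it reaches the same conclusion by a closed-form route where the paper argues recursively. The paper's proof writes the truthful VCG per-click payment as a recursion $p_{i}^{V}=\frac{\mu_i}{c_i}\bigl[v_{i+1}c_{i+1}+\frac{(1-\mu_{i+1})c_{i+1}}{\mu_{i+1}}p_{i+1}^{V}\bigr]$, shows that substituting the equilibrium bids of Theorem~\ref{thmNash} into the $CE$ price yields the identical recursion for $p_{i}^{CE}$, and concludes $p_i^{CE}=p_i^{V}$ by induction from the common base case $p_N=0$; combined with the identical allocation this gives revenue equivalence. You instead unroll the same Nash recursion (your $x_i=v_ic_i+(1-\mu_i)x_{i+1}$ is exactly the paper's Equation~\ref{eq-nashEqBids} rewritten in terms of $w_ib_i$) to the explicit form $x_{i+1}=\sum_{k>i}v_kc_k\alpha_k/\alpha_{i+1}$, obtain each advertiser's expected payment as $\frac{\mu_i}{1-\mu_i}\sum_{k>i}v_kc_k\alpha_k$, and observe that this coincides term-by-term with the VCG externality payment (which matches the paper's intermediate quantity $p_i^{V_u}$ before it divides by the click probability). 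Both arguments hinge on the same two facts—identical allocations under the order $v_ic_i/\mu_i$, and per-advertiser payment equality—so the difference is one of bookkeeping: the paper's induction avoids unrolling sums, while your version buys an explicit closed form for the equilibrium revenue, $\sum_k v_kc_k\alpha_k\sum_{i<k}\mu_i/(1-\mu_i)$, which the paper never states. Note also that your final summation swap is not actually needed: once the per-$i$ expected payments agree, revenue equivalence already follows.
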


\emph{Proof Sketch:}
The proof is an inductive extension of the  of Theorem~\ref{thm-CERevnueDominanceVCG}. Please see Appendix~\ref{appendix-VCGRevnueEquiv} for complete proof.
$\Box$

Note that the $CE$ equilibrium has lower bid values than VCG at the equilibrium, but provides the same profit to the search engine.

\section{CE Ranking Considering Mutual Influences: Diversity Ranking}
\label{sec-divRanking}
An assumption in CE ranking is that the entities are mutually independent as we pointed out in Section~\ref{sec-clickModel}. In other words, the three parameters---$U(e)$, $C(e)$ and $\gamma(e)$---of an entity do not depend on other entities in the ranked list. In this section we relax this assumption and analyze the implications. Since the nature of the mutual influence may vary for different problems, we base our analysis on a specific well known  problem---ranking considering diversity~\cite{carterette-analysis,agrawal2009diversifying,rafiei-diversifying}.

Diversity ranking accounts for the fact that the utility of an entity is reduced by the presence of a similar  entity above in the ranked list. This is a typical example of the mutual influence between the entities. All the existing objective functions for the diversity ranking are known to be NP-Hard~\cite{carterette-analysis}. We  analyze a most basic form of diversity ranking to explain why this is a fundamentally hard problem.

We modify the objective function in Equation~\ref{objective-eq} slightly to distinguish between the stand-alone utilities and the residual utilities---utility of an entity in the context of other entities in the list---as,
\begin{equation}E(U)=\sum_{i=1}^{N}U_r(e_i)P_c(e_i)\label{div-objective-eq}\end{equation}
where $U_r(e_i)$ denotes the residual utility.

We consider a simple case of diversity ranking problem by considering a set of entities---all having the same utilities, perceived relevances and abandonment probabilities. Some of these entities may be repeating. If an entity in  the ranked list is same as the entity in the list above, residual utility of that entity becomes zero. In this case, it is intuitive that the optimal ranking is to place maximum number of  pair wise dissimilar entities in the top slots. The theorem below shows that even in this simple case the optimal ranking is NP-Hard.

\begin{thm}Diversity ranking  optimizing
expected utility in Equation \ref{div-objective-eq}
is NP-Hard.\label{thmnphard}\end{thm}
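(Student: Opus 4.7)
The plan is to reduce from Maximum Independent Set, exploiting the uniform parameters to collapse the objective onto a single combinatorial invariant. Two observations set the stage. First, since every entity shares the same perceived relevance $C$ and abandonment probability $\gamma$, Equation~\ref{eqn-umodel} collapses to $P_c(e_i)=C(1-C-\gamma)^{i-1}$, a quantity that depends only on the position $i$ and is strictly decreasing in it. Second, since each entity has the same stand-alone utility $U$, and residual utility $U_r=U$ exactly when the entity is fresh (no ``same'' predecessor above) and $U_r=0$ otherwise, the objective in Equation~\ref{div-objective-eq} becomes
\[
E(U) \;=\; U \sum_{i\in F} C(1-C-\gamma)^{i-1},
\]
where $F\subseteq\{1,\ldots,N\}$ is the set of positions occupied by fresh entities.

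Given an arbitrary graph $G=(V,E)$ on $n$ vertices (an instance of Maximum Independent Set), I would build a diversity ranking instance with one entity per vertex, all sharing common parameters $U,C,\gamma$ with $0<C+\gamma<1$, and declare two entities to be ``the same'' (for the zero-residual-utility rule) precisely when the corresponding vertices are adjacent in $G$. The reduction then rests on two claims. (i) \emph{Structural claim:} in any ranking, the set of fresh entities forms an independent set in $G$---if two adjacent entities were both fresh, the lower one would already have an adjacent predecessor above and thus could not be fresh. (ii) \emph{Placement claim:} for any independent set $I$ of $G$, placing $I$ (in any internal order) in positions $1,\ldots,|I|$ and the remaining vertices below yields a ranking whose fresh set is exactly $I$, achieving utility $U\sum_{i=1}^{|I|} P_c(i)$.

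Combining (i) and (ii) with the strict monotonicity of $P_c$, any ranking with a fresh set of size $s$ has utility at most $U\sum_{i=1}^{s} P_c(i)$ (since the largest $s$ values among $P_c(1),\ldots,P_c(N)$ sit at positions $1,\ldots,s$), and this upper bound is attained by the construction of (ii). Hence the optimum of the diversity ranking equals $U\sum_{i=1}^{s^{\ast}} P_c(i)$, where $s^{\ast}$ is the independence number of $G$. Because the map $s\mapsto U\sum_{i=1}^{s} P_c(i)$ is strictly increasing and efficiently invertible (tabulate it for $s=0,1,\ldots,n$ and match), any polynomial-time algorithm for diversity ranking would recover $s^{\ast}$, establishing NP-hardness. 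The same construction would then inherit the known inapproximability of Maximum Independent Set, supporting the paper's companion claim that no constant-factor approximation is likely.

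The main obstacle is isolating the bound in step (i) from the temptation of a purely local exchange argument. A single adjacent swap of a duplicate with a fresh entity in a higher slot can alter the freshness of entities further down the list---a duplicate may cease to be a duplicate once its ``same'' predecessor slips below it, and a formerly fresh entity may become a duplicate when a similar vertex slides above it---so the adjacent-swap style used for Theorem~\ref{thmRF} does not transfer directly. The indirect route sketched above (a global upper bound from (i) matched by an explicit construction in (ii)) sidesteps this difficulty and yields the theorem without having to track these cascades.
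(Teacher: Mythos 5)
Your proposal is correct and follows the same route as the paper: a reduction from Maximum Independent Set in which the adjacency matrix supplies the similarity relation and all entities share common $U$, $C$, $\gamma$, so that the optimal ranking places a maximum independent set in the top slots. The paper's appendix asserts this last step with a bare ``clearly,'' whereas you actually supply the missing justification---the collapse of the objective to $U\sum_{i\in F}P_c(i)$, the observation that the fresh set is always independent, and the strictly increasing, invertible map from the independence number to the optimal value---so your write-up is a faithful but more rigorous version of the paper's own argument.
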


\emph{Proof Sketch:}
The proof is by reduction from the independent set problem. See Appendix \ref{AppendixNPH} for the complete proof.
$\Box$

Moreover, the proof by reduction from independent set problem has  more severe implications than NP-Hardness as shown in the following corollary,

\newtheorem{corrollary}{Corollary}\begin{corrollary}The constant approximation algorithm for ranking considering diversity is hard.\label{thmcorr}\end{corrollary}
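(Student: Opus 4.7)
The plan is to lift the reduction used in the proof of Theorem~\ref{thmnphard} to a gap-preserving reduction, and then invoke the known inapproximability of the maximum independent set problem. Recall that in the NP-hardness reduction, entities are in correspondence with vertices of a graph $G$, and two entities are ``similar'' (hence annihilate each other's residual utility when one appears above the other in the list) exactly when the corresponding vertices are adjacent in $G$. Under the uniform-parameter setup stipulated in the statement preceding the theorem, the expected utility in Equation~\ref{div-objective-eq} depends only on how many of the top positions are occupied by a set of pairwise-dissimilar entities, i.e., by the size of an independent set of $G$ that the ranking realizes in its prefix.

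First, I would make this dependence explicit: with common utility $U$, common perceived relevance $C$, and common abandonment probability $\gamma$, the contribution to $E(U)$ of position $i$ is $U \cdot C \cdot (1-C-\gamma)^{i-1}$ whenever the entity at position $i$ is dissimilar to every entity above it, and $0$ otherwise. Summing a geometric series, the optimal expected utility equals $\frac{UC}{C+\gamma}\bigl(1 - (1-C-\gamma)^{s^*}\bigr)$, where $s^*$ is the size of the longest pairwise-dissimilar prefix an optimal ranking achieves, which is exactly the maximum independent set size $\alpha(G)$ (any independent set of size $s$ can be placed in the top $s$ positions). An arbitrary ranking's value is likewise governed by the size $s$ of the largest pairwise-dissimilar prefix it produces, and in particular from any ranking we can extract in polynomial time an independent set of size at least $s$.

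Second, I would use this to transfer approximation hardness. Suppose, for contradiction, there were a polynomial-time constant-factor $\rho$-approximation for the diversity ranking problem. Given $G$ on $n$ vertices, applying the approximation yields a ranking with expected utility at least $\rho \cdot \tfrac{UC}{C+\gamma}\bigl(1 - (1-C-\gamma)^{\alpha(G)}\bigr)$, from which we recover an independent set of size $s$ with $1 - (1-C-\gamma)^{s} \geq \rho \bigl(1 - (1-C-\gamma)^{\alpha(G)}\bigr)$. By choosing the common parameters so that $(1-C-\gamma)$ is very close to $1$ (e.g.\ scaling with $n$, which is legal since $C$ and $\gamma$ are free parameters of the reduction), the exponent $s$ is forced to within a constant factor of $\alpha(G)$. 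This yields a polynomial-time constant-factor approximation for maximum independent set, contradicting H\aa stad's result that independent set is not approximable within $n^{1-\epsilon}$ unless $\mathrm{P}=\mathrm{NP}$~\cite{vickrey1961counterspeculation}\footnote{Cite the appropriate inapproximability reference used by the paper.}.

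The main obstacle is the second step: the objective is a bounded, concave function of the prefix length $s$, so a naive transfer loses the approximation factor. The delicate point is choosing the parameters $C$ and $\gamma$ as a function of $n$ so that the geometric decay is gentle enough to preserve the gap between ``large $\alpha(G)$'' and ``small $\alpha(G)$'' instances up to a constant; this is where care is required, but is a standard padding-and-tuning argument. The remainder of the proof consists in verifying that this padded construction is polynomial-time computable and that the extracted prefix is indeed an independent set, both of which follow immediately from the original reduction.
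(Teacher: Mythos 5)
Your proposal is correct and follows the same high-level route as the paper's proof: reduce from maximum independent set via the construction of Theorem~\ref{thmnphard} and invoke H{\aa}stad's inapproximability result. The difference is one of rigor. The paper's proof is a one-liner asserting that because independent set is ``a special case'' of diversity ranking, a constant-ratio approximation for the latter yields one for the former; it never addresses the fact that the ranking objective in Equation~\ref{div-objective-eq} is not the independent-set size itself but a saturating function of it, namely $\frac{UC}{C+\gamma}\bigl(1-(1-C-\gamma)^{s}\bigr)$ where $s$ is the length of the pairwise-dissimilar prefix. For fixed parameters this function flattens out, so a constant-factor guarantee on expected utility does \emph{not} automatically give a constant-factor guarantee on $s$ (e.g.\ with $C+\gamma=1/2$, a prefix of length $1$ already captures half the optimal objective no matter how large $\alpha(G)$ is). Your second step --- choosing $C+\gamma$ to shrink with $n$ so the objective becomes essentially linear in $s$ over the relevant range, making the reduction gap-preserving --- is exactly the missing argument, and your observation that an independent set of size $s$ can be extracted in polynomial time from any ranking closes the loop. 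Two small corrections: the citation for the inapproximability result is a placeholder pointing at the wrong reference (it should be H{\aa}stad / Garey--Johnson as in the paper), and H{\aa}stad's $n^{1-\epsilon}$ hardness is conditioned on NP not being contained in probabilistic polynomial time (as the paper states), not on $\mathrm{P}\neq\mathrm{NP}$.
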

{\em Proof:} The proof of NP-Hardness theorem above shows that the independent set problem is a special case of diversity ranking. This implies that a constant ratio approximation algorithm for the optimal diversity ranking would be a constant ratio approximation algorithm for the independent set problem. Since  constant ratio approximation of the independent set is known to be hard (\emph{cf.} Garey and Johnson~\cite{gareyIS} and H{\aa}stad \cite{hastad1996cha}) the corollary follows. To define hard, in his landmark paper H{\aa}stad proved that independent set cannot be solved within $n^{1-\epsilon}$ for $\epsilon > 0 $ unless all problems in $NP$ are solvable in probabilistic polynomial time, which is widely believed to be not possible.\footnote{This belief is almost as strong as the belief $P\ne NP$}
$\Box$

This section shows that the optimal ranking considering mutual influences of parameters is hard. We are leaving formulating approximation algorithms (not necessarily constant ratio) for future research.

Beyond proving the intractability of mutual influence ranking, we believe that intractability of the simple scenario here explains why all diversity rankings  are likely to be intractable. Further the proof based on the reduction from the well explored independent set problem may help in adapting approximations algorithms from graph theory.

\section{Conclusion and Future Work}
We approach the web ranking as a utility maximization based on user's click model, and derive the optimal ranking---namely CE ranking. The ranking is simple and intuitive; and optimal (for the given utilities) considering  perceived relevance and abandonment probability of user behavior.

 For specific assumptions on parameters, the ranking function reduces to a taxonomy of ranking functions in multiple ranking domains. The enumerated taxonomy will help to decide optimal ranking for a specific user behavior. In addition, the taxonomy shows that the existing document and ad ranking strategies are special cases of the proposed ranking function under specific assumptions.   

 To apply CE ranking to ad auctions, we incorporate a  second price based pricing. The resulting CE mechanism has a Nash Equilibrium which simultaneously optimizes search engine and advertiser revenues. CE mechanism is revenue dominant over VCG for the same bid vectors, and has an equilibrium which is revenue equivalent with the truthful equilibrium of VCG.

Finally, we relax the assumption of independence between entities in CE ranking and consider diversity ranking. The ensuing analysis revels that diversity ranking is an inherently hard problem; since even the basic formulations are NP-Hard with unlikely constant ratio approximation algorithms.

Our previous simulation studies suggest significant improvement in profits by CE ranking over existing ranking strategies~\cite{raju2008optimal}. As a future research, assessing profits by CE mechanism on a large scale search engine click log will quantify improvement in a real environment. Learning and prediction of abandonment probability from click logs as well as by  parametric learning are interesting problems. The suggested ranking is optimal for other web ranking scenarios with similar click models---like products and friends recommendations---and may be extended to these problems. Further, effective approximation schemes for diversity ranking based on similarity with the independent set problem may be investigated.

\bibliographystyle{acmtrans}
\bibliography{profitmax}
\renewcommand{\theequation}{A-\arabic{equation}}
\renewcommand{\thesubsection}{A-\arabic{subsection}}
\setcounter{equation}{0}  
{\centering
\section*{APPENDIX}}
\subsection{Proof of Theorem \ref{thmRF}}
\label{appendixRF}
\begin{proof}
Consider results $e_i$ and $e_{i+1}$ in positions $i$ and $i+1$ respectively. Let $\mu_i= \gamma (e_i)+C(e_i)$ for notational convenience.
The total expected utility from $e_i$ and $e_{i+1}$ when $e_i$ is placed above $e_{i+1}$ is
\begin{displaymath}
\prod_{j=1}^{i-1}{(1-\mu_j)}\left[ U(e_i) C(e_i) + (1- \mu_i)  U(e_{i+1}) C(e_{i+1})\right]
\end{displaymath}
If the order of $e_i$ and $e_{i+1}$ are inverted by  placing $e_i$ above $e_{i+1}$, the expected utility from these entities will be,
\begin{displaymath}
\prod_{j=1}^{i-1}(1-\mu_j)\left[ U(e_{i+1}) C(e_{i+1}) + (1- \mu_{i+1})  U(e_i) C(e_i))\right]
\end{displaymath}
Since utilities from all other results in the list will remain the same, the expected utility of placing $e_i$ above $e_{i+1}$ is greater than inverse placement \emph{iff}
\begin{eqnarray}
U(e_i) C(e_i) + (1- \mu_i)  U(e_{i+1}) C(e_{i+1}) &\ge&  U(e_{i+1}) C(e_{i+1}) + (1- \mu_{i+1})  U(e_i) C(e_i) \nonumber \\
& \Updownarrow & \nonumber \\
\frac{U(e_i) C(e_i)}{\mu_i}  &\ge&  \frac{U(e_{i+1}) C(e_{i+1})}{\mu_{i+1}} \nonumber
\end{eqnarray}
This means if entities are ranked in the descending order of $\frac{U(e) C(e)}{C(e)+\gamma(e)}$ any inversions will reduce the profit. Otherwise ranking by  $\frac{U(e) C(e)}{C(e)+\gamma(e)}$ is optimal.
\end{proof}

\subsection{Proof of Theorem \ref{thm-CEsameWi}}
\label{appendix-CESameWi}
\begin{proof}
Without  loss of generality, we assume that $a_i$ refers to ad in the position $i$ in the descending order of $w_ib_i$.
\begin{eqnarray}
CE_i& = & \frac{p_ic_i}{\mu_i} \nonumber \\
    & = & \frac{b_{i+1}c_{i+1} \mu_i }{\mu_{i+1} c_i} \frac{c_i}{\mu_i}  \nonumber \\
    & = &  \frac{b_{i+1}c_{i+1}}{\mu_{i+1}} \nonumber \\
    & = &  w_{i+1}b_{i+1} \nonumber \\
    & \ge &  w_{i+2}b_{i+2} = CE_{i+1}\nonumber
\end{eqnarray}
\end{proof}

\subsection{Proof of Theorem \ref{thmNash}}
\label{appedixNash}
Let there are $n$ advertisers. Without loss of generality, let us assume that advertisers are indexed in the descending order of $\frac{v_i c_i}{\mu_i}$. We prove equilibrium in two steps.

\noindent
\textbf{Step 1:} Prove that
\begin{equation}w_i b_i \ge w_{i+1}b_{i+1} \label{eq-weightedEqOrder} \end{equation}

\begin{proof}
\begin{displaymath}
w_i  b_i            =            \frac{b_i c_i}{\mu_i}
\end{displaymath}
Expanding $b_i$ by Equation~\ref{eq-nashEqBids},
\begin{eqnarray}
  w_i b_i    &       =           &  v_ic_i+(1-\mu_i)\frac{b_{i+1}c_{i+1}}{\mu_{i+1}}  \nonumber \\
             &       =           &  v_ic_i+(1-\mu_i)w_{i+1}b_{i+1}  \nonumber \\
             &       =           &  \frac{v_ic_i}{\mu_i} \mu_i+(1-\mu_i)w_{i+1} b_{i+1}  \nonumber
\end{eqnarray}
Notice that $w_i b_i$ is a convex linear combination of $w_{i+1}b_{i+1}$ and $\frac{v_ic_i}{\mu_i}$. This means that the value of $w_ib_i$ is in between (or equal to) the values of $w_{i+1}b_{i+1}$ and $\frac{v_ic_i}{\mu_i}$. Hence to prove that $w_ib_i\ge w_{i+1}b_{i+1}$ all we need to prove is that $\frac{v_ic_i}{\mu_i} \ge w_{i+1}b_{i+1}$. This inductive proof is given below.

\noindent
\textbf{Induction hypothesis: } Assume that
\begin{displaymath}\forall_{i \ge  j } \frac{v_ic_i}{\mu_i} \ge w_{i+1} b_{i+1}\end{displaymath}

\noindent
\textbf{Base case: } Prove for $i=N$  i.e. for the bottommost  ad.
\begin{displaymath}\frac{v_{N-1} c_{N-1}}{\mu_{N-1}}  \ge w_{N} b_{N}\end{displaymath}
Assuming $\forall_{i > N} b_i = 0 $
\begin{displaymath} w_Nb_N = v_N c_N \le \frac{v_N c_N}{\mu_N} \mbox{ (as $\mu_N \le 1$)}  \le \frac{v_{N-1}c_{N-1}}{\mu_{N-1}} \mbox{ (by the assumed order i.e. by $\frac{v_i c_i}{\mu_i}$)}\end{displaymath}

\noindent
\textbf{Induction: }Expanding $w_jb_j$ by Equation~\ref{eq-nashEqBids},
\begin{displaymath}w_j b_j =   \frac{v_j c_j}{\mu_j} \mu_j+(1-\mu_j)w_{j+1} b_{j+1}  \end{displaymath}
 $w_jb_j$ is the convex linear combination, i.e $  \frac{v_j c_j}{\mu_j} \ge w_{j}b_{j} \ge w_{j+1} b_{j+1}$, as we know that $\frac{v_jc_j}{\mu_j}  \ge w_{j+1}b_{j+1}$ by induction hypothesis. Consequently,
\begin{displaymath}w_j b_j \le \frac{v_j c_j}{\mu_j} \le \frac{v_{j-1}c_{j-1}}{\mu_{j-1}}\mbox{ (by the assumed order)}\end{displaymath}
 This completes the induction.
\end{proof}

Since advertisers are ordered by $w_ib_i$ for pricing, the above proof says that the pricing order is the same as the assumed order in this proof (i.e. ordering by $\frac{v_i  c_i}{\mu_i}$). Consequently,
\begin{displaymath}p_i = \frac{b_{i+1}c_{i+1} \mu_i}{\mu_{i+1}c_i} \end{displaymath}

As corollary of Theorem~\ref{thm-CEsameWi} we know that $CE_i \ge CE_{i+1}$.

In the second step we prove the envy free equilibrium using results in Step~1.
\\
\\
\noindent
\textbf{Step 2: No advertiser can increase his profit by changing his bids unilaterally}
\begin{proof}[ of Envy Freeness to Advertisers Below]
\label{proof-envrfreeBelow}
 In the first step let us prove that ad $a_i$ can not increase his profit by decreasing his bid to move to a position $j \ge i$ below.

\noindent
\textbf{Inductive hypothesis: } Assume  true for $ i \le j \le m $.

\noindent
\textbf{Base Case:} Trivially true for $j=i$.

\noindent
\textbf{Induction:} Prove that the expected profit of $a_i$ at $m+1$  is less or equal to the expected profit of $a_i$ at $i$.

Let $\rho_k$ denotes  the amount paid by $a_i$ when he is at the position $k$. By inductive hypothesis, the expected profit at $m$ is less or equal to the expected profit at $i$. So we just need to prove that the expected profit at $m+1$ is less or equal to the expected profit at $m$. i.e.
\begin{displaymath}
\frac{(v_i - \rho_{m})}{(1-\mu_i)}\prod_{l=1}^{m}(1-\mu_l)  \ge   \frac{(v_i - \rho_{m+1})}{(1-\mu_i)}\prod_{l=1}^{m+1}(1-\mu_l)
\end{displaymath}
Canceling the common terms,
\begin{equation}
v_i - \rho_{m}   \ge   (v_i - \rho_{m+1})( 1 - \mu_{m+1})
\label{eq-proofProfitBelow}
\end{equation}
$\rho_m$---the price charged to $a_i$ at position $m$---is based on the Equations~\ref{eq-nextPricePricing} and ~\ref{eq-nashEqBids}. Since the $a_i$ is moving downward, $a_i$ will occupy position $m$ by shifting  ad $a_m$ upwards. Hence the ad just below $a_i$ is $a_{m+1}$. Consequently, the price charged to $a_i$ when it is at the $m^{th}$ position is,
\begin{displaymath}
\rho_m  = \frac{b_{m+1}c_{m+1}\mu_{i}}{\mu_{m+1}c_{i}} = \frac{\mu_i}{c_i}\left[ v_{m+1}c_{m+1}+ (1-\mu_{m+1})\frac{b_{m+2}c_{m+2}}{\mu_{m+2}} \right]
\end{displaymath}
Substituting for $\rho_m$ and $\rho_{m+1}$ in Equation~\ref{eq-proofProfitBelow},
\begin{eqnarray}
 v_i \! - \! \frac{\mu_i}{c_i}\left[v_{m+1}c_{m+1}+ (1-\mu_{m+1})\frac{b_{m+2}c_{m+2}}{\mu_{m+2}}\right]\!\!&\ge&\!\! \left(v_i-\frac{\mu_i}{c_i}\left[v_{m+2}c_{m+2}+ \frac{}{} \right.\right. \nonumber \\
  && \left. \left. (1-\mu_{m+2})\frac{b_{m+3}c_{m+3}}{\mu_{m+3}}\right]\right)\!( 1 \! -\!\mu_{m+1}) \nonumber
 \end{eqnarray}
 Simplifying, and  multiplying  both sides by $-1$
 \begin{eqnarray}
 \frac{\mu_i}{c_i}\left[v_{m+1}c_{m+1}+(1-\mu_{m+1})\frac{b_{m+2}c_{m+2}}{\mu_{m+2}}\right]& \le &v_i\mu_{m+1}+  \frac{\mu_i}{c_i} ( 1 - \mu_{m+1}) \left[\frac{}{}v_{m+2}c_{m+2}+ \right.  \nonumber \\
 && \;\;\;\;\;\;\left. (1-\mu_{m+2})\frac{b_{m+3}c_{m+3}}{\mu_{m+3}}\right]  \nonumber
 \end{eqnarray}
 Substituting by $b_{m+2}$ from Equation~\ref{eq-nashEqBids} on RHS.
 \begin{displaymath}
 \frac{\mu_i}{c_i}\left[v_{m+1}c_{m+1}+(1-\mu_{m+1})\frac{b_{m+2}c_{m+2}}{\mu_{m+2}}\right] \le v_i\mu_{m+1}+  \frac{\mu_i}{c_i} ( 1 - \mu_{m+1}) \frac{b_{m+2} c_{m+2}}{\mu_{m+2}}\nonumber \\
 \end{displaymath}
  Canceling out the common terms on both sides,
 \begin{eqnarray}
 \frac{\mu_i}{c_i} v_{m+1}c_{m+1}& \le &v_i\mu_{m+1} \nonumber \\
 &\Updownarrow& \nonumber \\
 \frac{v_{m+1}c_{m+1}}{\mu_{m+1}} & \le & \frac{v_ic_i}{\mu_i} \nonumber
\end{eqnarray}
Which is true by the assumed order as $m \ge i$
\end{proof}

\noindent
Inductive proof for $ m \le i$ is somewhat similar and enumerated below.

\noindent
\textbf{Inductive hypothesis: } Assume  true for $j \le m $.

\noindent
\textbf{Base Case:} Trivially true for $j=i$.

\begin{proof}[of Envy freeness to the ad one above ]

The case in which $a_i$ increase his bid to move one position up i.e. to $i-1$ is a special case and need to be proved separately. In this case, by moving a single slot up, the index of the ad below $a_i$ will change from $i+1$ to $i-1$ (a difference of two). For all other movements of $a_i$ to a position one above or one below, the index of the advertisers below will change only by one. Since the amount paid by $a_i$ depends on the ad below $a_i$, this case warrants a slightly different proof,
\begin{eqnarray}
(v_i - \rho_{i})\prod_{l=1}^{i-1}(1-\mu_l) & \ge &  (v_i - \rho_{m-1})\prod_{l=1}^{i-2}(1-\mu_l) \nonumber \\
&\Updownarrow& \nonumber\\
(v_i - \rho_{i})(1-\mu_{i-1})  & \ge &  v_i - \rho_{i-1} \nonumber
\end{eqnarray}
Expanding $\rho_i$ is straight forward.To expand $\rho_{i-1}$, note that when $a_i$ has moved upwards to $i-1$, the ad just below $a_i$ is  $a_{i-1}$. Since $a_{i-1}$ has not changed its bids, the  $\rho_{i-1}$ can be expanded  as $\frac{\mu_i}{c_i}\left[ v_{i-1}c_{i-1}+ (1-\mu_{i-1})\frac{b_{i}c_{i}}{\mu_{i}}\right]$. Substituting for $\rho_i$ and $\rho_{i-1}$,
\begin{eqnarray}
 \left( v_i - \frac{\mu_i}{c_i}\left[v_{i+1}c_{i+1}+ \frac{}{} \right.\right. \;\;\;\;\;\; & \ge & v_i-\frac{\mu_i}{c_i}\left[ v_{i-1}c_{i-1}+ \frac{}{} \right. \nonumber  \\
\left.\left. (1-\mu_{i+1})\frac{b_{i+2}c_{i+2}}{\mu_{i+2}}\right] \right) (1-\mu_{i-1}) &  & \;\;\;\;\;\left.(1-\mu_{i-1})\frac{b_{i}c_{i}}{\mu_{i}}\right] \nonumber
\end{eqnarray}
 Simplifying and multiplying by $-1$
 \begin{eqnarray}
 v_i\mu_{i-1}+\frac{\mu_i}{c_i}\left[v_{i+1}c_{i+1} +  \frac{}{} \right. \;\;\;\;\;\;& \le &  \frac{\mu_i}{c_i} \left[v_{i-1}c_{i-1}+ (1-\mu_{i-1})\frac{b_{i}c_{i}}{\mu_{i}}\right]  \nonumber \\
 \left. (1-\mu_{i+1})\frac{b_{i+2}c_{i+2}}{\mu_{i+2}}\right]  (1-\mu_{i-1}) && \nonumber
 \end{eqnarray}
 Substituting $b_{i+1}$ from Equation~\ref{eq-nashEqBids}
  \begin{eqnarray}
  v_i\mu_{i-1}+\frac{\mu_i}{c_i} \frac{b_{i+1}c_{i+1}}{\mu_{i+1}} (1-\mu_{i-1})& \le &  \frac{\mu_i}{c_i} \left[v_{i-1}c_{i-1}+ (1-\mu_{i-1})\frac{b_{i}c_{i}}{\mu_{i}}\right]  \nonumber \\
  &\Updownarrow& \nonumber \\
 v_i\mu_{i-1}+\frac{\mu_i}{c_i} (1-\mu_{i-1}) \frac{b_{i+1}c_{i+1}}{\mu_{i+1}} & \le &     \frac{\mu_i v_{i-1}c_{i-1} }{c_i} + \frac{\mu_i}{c_i}(1-\mu_{i-1})\frac{b_{i}c_{i}}{\mu_{i}} \nonumber
 \end{eqnarray}
 We now prove that both the terms in RHS are greater or equal to the corresponding terms in LHS separately.
\begin{eqnarray}
v_i\mu_{i-1}  & \le   &   \frac{\mu_i v_{i-1}c_{i-1} }{c_i}  \nonumber \\
 & \Updownarrow   &   \nonumber \\
 \frac{v_ic_i}{\mu_i}  & \le   &   \frac{ v_{i-1}c_{i-1} }{\mu_{i-1}}  \nonumber
\end{eqnarray}
Which is true by our assumed order.

\noindent
Similarly,
\begin{eqnarray}
\frac{\mu_i}{c_i} (1-\mu_{i-1}) \frac{b_{i+1}c_{i+1}}{\mu_{i+1}} & \le & \frac{\mu_i}{c_i}(1-\mu_{i-1})\frac{b_{i}c_{i}}{\mu_{i}} \nonumber  \\
& \Updownarrow   &   \nonumber \\
 \frac{b_{i+1}c_{i+1}}{\mu_{i+1}} & \le & \frac{b_{i}c_{i}}{\mu_{i}} \nonumber
\end{eqnarray}
Which is true by Equation~\ref{eq-weightedEqOrder} above. This completes the proof for this  case.
\end{proof}

\noindent
\textbf{Induction:}
Prove that the expected profit at $m-1$ is less or equal to the expected profit at $m$. The proof is similar to the induction for the case $m>i$.
\begin{proof}
Base case is trivially true.
\begin{displaymath}
(v_i - \rho_{m})\prod_{l=1}^{m-1}(1-\mu_l)  \ge   (v_i - \rho_{m-1})\prod_{l=1}^{m-2}(1-\mu_l)
\end{displaymath}
Canceling common terms,
\begin{displaymath}
(v_i - \rho_{m})(1-\mu_{m-1})   \ge   v_i - \rho_{m-1}
\end{displaymath}
In this case, note that $a_i$ is moving upwards. This means that $a_i$ will occupy position $m$ by pushing the ad originally at $m$ one position downwards. Hence the original ad at $m$ is the one just below $a_i$ now. i.e.
\begin{displaymath}
\rho_m  = \frac{b_{m}c_{m}\mu_{i}}{\mu_{m}c_{i}} = \frac{\mu_i}{c_i}\left[ v_{m}c_{m}+ (1-\mu_{m})\frac{b_{m+1}c_{m+1}}{\mu_{m+1}} \right]
\end{displaymath}
Substituting for $\rho_m$ and $\rho_{m-1}$
\begin{eqnarray}
 \left( v_i - \frac{\mu_i}{c_i}\left[v_{m}c_{m}+ \frac{}{} \right.\right. \;\;\;\;\;\; & \ge & v_i-\frac{\mu_i}{c_i}\left[ v_{m-1}c_{m-1}+ \frac{}{} \right. \nonumber  \\
\left.\left. (1-\mu_{m})\frac{b_{m+1}c_{m+1}}{\mu_{m+1}}\right] \right) (1-\mu_{m-1}) &  & \;\;\;\;\;\left.(1-\mu_{m-1})\frac{b_{m}c_{m}}{\mu_{m}}\right] \nonumber
\end{eqnarray}
 Simplifying and multiplying by $-1$
 \begin{eqnarray}
 v_i\mu_{m-1}+\frac{\mu_i}{c_i}\left[v_{m}c_{m} +  \frac{}{} \right. \;\;\;\;\;\;& \le &  \frac{\mu_i}{c_i} \left[v_{m-1}c_{m-1}+ (1-\mu_{m-1})\frac{b_{m}c_{m}}{\mu_{m}}\right]  \nonumber \\
 \left. (1-\mu_{m})\frac{b_{m+1}c_{m+1}}{\mu_{m+1}}\right]  (1-\mu_{m-1}) && \nonumber
 \end{eqnarray}

 Substituting by $b_{m}$ from Equation~\ref{eq-nashEqBids}
 \begin{displaymath}
 v_i\mu_{m-1}+\frac{\mu_i}{c_i}\frac{b_mc_m}{\mu_m} (1-\mu_{m-1}) \le \frac{\mu_i}{c_i} \left[v_{m-1}c_{m-1}+ (1-\mu_{m-1})\frac{b_{m}c_{m}}{\mu_{m}}\right]
 \end{displaymath}
 Canceling common terms,
 \begin{eqnarray}
v_i\mu_{m-1} & \le &\frac{\mu_i}{c_i} v_{m-1}c_{m-1}  \nonumber \\
&\Updownarrow& \nonumber \\
\frac{v_ic_i}{\mu_i} & \le & \frac{v_{m-1}c_{m-1}}{\mu_{m-1}}  \nonumber
\end{eqnarray}
Which is true by the assumed order as $ m < i$.
\end{proof}

\subsection{Proof of Theorem \ref{thm-CERevnueDominanceVCG}}
\label{appendix-CERevnueDominanceVCG}
\begin{proof}
VCG payment of the ad at position $i$ (i.e. $a_i$) is equal to the reduction in utility of the ads below due to the presence of  $a_i$. For each user viewing the list of ads (i.e. for unit view probability), the total expected loss of ads below $a_i$ due to $a_i$ is,
\begin{eqnarray}
p_{i}^{V_u} & = & \frac{1}{1-\mu_i}\sum_{j=i+1}^{n} b_j c_j\prod_{k=1}^{j-1}(1-\mu_k) - \sum_{j=i+1}^{n} b_j c_j\prod_{k=1}^{j-1}(1-\mu_k)  \nonumber \\
 & = & \frac{\mu_i}{1-\mu_i}\sum_{j=i+1}^{n} b_j c_j\prod_{k=1}^{j-1}(1-\mu_k) \nonumber \\
 & = & \frac{\mu_i}{1-\mu_i} \prod_{k=1}^{i}(1-\mu_k)\sum_{j=i+1}^{n} b_jc_j\prod_{k=i+1}^{j-1}(1-\mu_k) \nonumber \\
 & = & \mu_i \prod_{k=1}^{i-1}(1-\mu_k)\sum_{j=i+1}^{n} b_jc_j\prod_{k=i+1}^{j-1}(1-\mu_k) \nonumber
\end{eqnarray}
This is the expected lose per user browsing the ad list. Pay per click should be equal to the lose per click. To calculate the pay per click, we divide by the  click probability of $a_i$. i.e.
\begin{eqnarray}
p_{i}^{V} & = & \frac{ \mu_i \prod_{k=1}^{i-1}(1-\mu_k)\sum_{j=i+1}^{n} b_jc_j\prod_{k=i+1}^{j-1}(1-\mu_k)}{ c_i \prod_{k=1}^{i-1}(1-\mu_k)}  \nonumber \\
     & = &  \frac{\mu_i}{c_i} \sum_{j=i+1}^{n} b_jc_j\prod_{k=i+1}^{j-1}(1-\mu_k) \nonumber
\end{eqnarray}

Converting to recursive form,
\begin{eqnarray}
p_{i}^{V} & = &  \frac{ b_{i+1} \mu_i}{c_i}  c_{i+1}  +  (1-\mu_{i+1}) \frac{\mu_i c_{i+1}}{c_i\mu_{i+1}}p_{i+1}^{V}   \nonumber   \\
& = &   \frac{b_{i+1} \mu_i  c_{i+1}}{c_i \mu_{i+1}}  \mu_{i+1}  +  (1-\mu_{i+1}) \frac{\mu_i c_{i+1}}{c_i\mu_{i+1}}p_{i+1}^{V}  \label{eq-VCGrecursivePay}
\end{eqnarray}
For the  $CE$ mechanism payment from Equation~\ref{eq-nextPricePricing} is,
\begin{displaymath}
p_{i}^{CE}  =   \frac{b_{i+1}c_{i+1}\mu_i}{\mu_{i+1}c_i}
\end{displaymath}

 Note that $p_i^V$ is convex combination of $P_i^{CE}$ and  $\frac{\mu_i c_{i+1}}{c_i\mu_{i+1}}p_{i+1}^{V}$, and hence is between these two values. To prove that $p_i^{CE} \ge p_i^V  $ all we need to prove is that $ P_i^{CE} \ge \frac{\mu_i c_{i+1}}{c_i\mu_{i+1}}p_{i+1}^{V} \Leftrightarrow b_{i} \ge p_i^{V}$. This directly follows from individual rationality property of VCG. Alternatively,  a simple recursion with base case as  $p_N^V=0$ (bottommost ad) will prove the same. Note that we consider only the ranking (not selection), and hence the VCG pricing of the bottommost ad in the ranking is zero.
\end{proof}

\subsection{Proof of Theorem \ref{thm-VCGrevnueEqiv}}
\label{appendix-VCGRevnueEquiv}
\begin{proof}
Rearranging Equation~\ref{eq-VCGrecursivePay} and substituting true values for bid amounts,
\begin{eqnarray}
p_{i}^{V}& = &  \frac{\mu_i}{c_i} \left[ v_{i+1} c_{i+1}  + \frac{(1-\mu_{i+1})c_{i+1}}{\mu_{i+1}}p_{i+1}^{V} \right]  \nonumber
\end{eqnarray}
For the $CE$ mechanism, substituting  equilibrium bids from  Equation~\ref{eq-nashEqBids} in payment  (Equation~\ref{eq-nextPricePricing}),
\begin{displaymath}
p_{i}^{CE}  =   \frac{b_{i+1}c_{i+1}\mu_i}{\mu_{i+1}c_i} =  \frac{\mu_i}{c_i} \left[ v_{i+1} c_{i+1} +(1-\mu_{i+1}) \frac{b_{i+2}c_{i+2}}{\mu_{i+2}} \right] \end{displaymath}
 Rewriting  $b_{i+2}$ in terms of $p_{i+1}$,
\begin{eqnarray}
p_{i}^{CE} & = &  \frac{\mu_i}{c_i} \left[ v_{i+1} c_{i+1} +\frac{(1-\mu_{i+1})c_{i+1}}{\mu_{i+1}}  p_{i+1}^{CE} \right]   \nonumber \\
      & = &  p_{i}^{V} \;\;\;\;\mbox{(\emph{iff} $p_{i+1}^{V}=p_{i+1}^{CE}$)} \nonumber
\end{eqnarray}
 Ad at the bottommost position pays same amount zero, a simple recursion will prove that the payment for all positions for both VCG and the proposed equilibrium is the same.
\end{proof}

\subsection{Proof of Theorem \ref{thmnphard}}
\label{AppendixNPH}
\begin{proof} Independent set problem can be formulated as a ranking problem considering
similarities. Consider an unweighed graph G of $n$ vertices $ \{e_1,e_2,..e_n\} $ represented as an
adjacency matrix. This conversion is clearly polynomial time. Now, consider the values in the
adjacency matrix as  the similarity values between the entities to be
ranked. Let the entities have the same utilities, perceive relevances  and abandonment probabilities. In this set of $n$ entities from $\{e_1,
e_2,..,e_n\}$, clearly the optimal ranking will have $k$ pairwise independent entities as the top $k$ entities for a maximum possible value of $k$. But the set of $k$ independent entities corresponds
to the maximum independent set in graph G.
\end{proof}

\end{document}